\tikzset{snake it/.style={decorate, decoration=snake}}
\newlength\Origarrayrulewidth
\newcommand\mc{\multicolumn{1}{c}{\cellcolor{lightgray}\textbf{1}}}
\newcommand{\term}[1]{\emph{#1}}
\begin{document}

\title{One Algorithm to Evaluate Them All: Unified Linear Algebra Based Approach to Evaluate Both Regular and Context-Free Path Queries%\thanks{Grants or other notes
%about the article that should go on the front page should be
%placed here. General acknowledgments should be placed at the end of the article.}
}
%\subtitle{Do you have a subtitle?\\ If so, write it here}

%\titlerunning{Short form of title}        % if too long for running head

\author{Ekaterina~Shemetova  \thanks {The research was supported by the Russian Science Foundation, grant number: 18-11-00100}       \and
        Rustam~Azimov \and
            Egor~Orachev \and
             Ilya~Epelbaum \and
             Semyon~Grigorev
}

%\authorrunning{Short form of author list} % if too long for running head

\institute{E. Shemetova \at
              Saint Petersburg State University, 7/9 Universitetskaya nab., St. Petersburg, Russia\\
              Saint Petersburg Academic University, 8/3 Khlopin St., St. Petersburg, Russia\\
              JetBrains Research, Primorskiy prospekt 68-70, Building 1, St. Petersburg, Russia\\
              \email{katyacyfra@gmail.com}           %  \\
%             \emph{Present address:} of F. Author  %  if needed
           \and
R. Azimov \at
              Saint Petersburg State University, 7/9 Universitetskaya nab., St. Petersburg, Russia\\
              JetBrains Research, Primorskiy prospekt 68-70, Building 1, St. Petersburg, Russia\\
              \email{rustam.azimov19021995@gmail.com}           %  \\
%             \emph{Present address:} of F. Author  %  if needed
           \and
E. Orachev \at
              Saint Petersburg State University, 7/9 Universitetskaya nab., St. Petersburg, Russia\\
              JetBrains Research, Primorskiy prospekt 68-70, Building 1, St. Petersburg, Russia\\
              \email{egor.orachev@gmail.com}           %  \\
%             \emph{Present address:} of F. Author  %  if needed
           \and
I. Epelbaum \at
              Saint Petersburg State University, 7/9 Universitetskaya nab., St. Petersburg, Russia\\
              JetBrains Research, Primorskiy prospekt 68-70, Building 1, St. Petersburg, Russia\\
              \email{iliyepelbaun@gmail.com}           %  \\
%             \emph{Present address:} of F. Author  %  if needed
           \and
S. Grigorev \at
              Saint Petersburg State University, 7/9 Universitetskaya nab., St. Petersburg, Russia\\
              JetBrains Research, Primorskiy prospekt 68-70, Building 1, St. Petersburg, Russia\\
              \email{s.v.grigoriev@spbu.ru, semyon.grigorev@jetbrains.com}           %  \\
%             \emph{Present address:} of F. Author  %  if needed
           \and
}

\date{Received: date / Accepted: date}
% The correct dates will be entered by the editor

\maketitle

\begin{abstract}
  The Kronecker product-based algorithm for context-free path querying (CFPQ) was proposed by \citet{10.1007/978-3-030-54832-2_6}. We reduce this algorithm to operations over Boolean matrices and extend it with the mechanism to extract all paths of interest. We also prove $O(n^3/\log{n})$ time complexity of the proposed algorithm, where $n$ is a number of vertices of the input graph. Thus, we provide the alternative way to construct a slightly subcubic algorithm for CFPQ which is based on linear algebra and incremental transitive closure
  (a classic graph-theoretic problem), as opposed to the algorithm with the same complexity proposed by~\cite{10.1145/1328438.1328460}. Our evaluation shows that our algorithm is a good candidate to be the universal algorithm for both regular and context-free path querying.
\keywords{Graph databases \and Regular path queries\and Context-free path queries \and CFL-reachability \and Dynamic transitive closure}
% \PACS{PACS code1 \and PACS code2 \and more}
% \subclass{MSC code1 \and MSC code2 \and more}
\end{abstract}

\begin{acknowledgements}
We are grateful to Ekaterina Verbitskaia for careful reading and pointing out some mistakes.
\end{acknowledgements}
\section{Introduction}

Language-constrained path querying~\citep{barrett2000formal} is a technique for graph navigation querying.
This technique allows one to use formal languages as constraints on paths in edge-labeled graphs: a path satisfies constraints if the labels along it form a word from the specified language.

The utilization of regular languages as constraints, or \textit{Regular Path Querying} (RPQ), is the most well-studied and widespread.
Different aspects of RPQs are actively studied in graph databases~\citep{10.1145/2463664.2465216, 10.1145/3104031,10.1145/2850413}, while regular constraints are supported in such popular query languages as PGQL~\citep{10.1145/2960414.2960421} and SPARQL\footnote{Specification of regular constraints in SPARQL property paths: \url{https://www.w3.org/TR/sparql11-property-paths/}. Access date: 07.07.2020.}~\citep{10.1007/978-3-319-25007-6_1} (known as property paths).
Nevertheless, there is certainly room for improvement of RPQ efficiency, and new solutions are being created~\citep{Wang2019,10.1145/2949689.2949711}.

At the same time, using more powerful languages  as constraints, namely context-free languages, has gained popularity in recent years.
\textit{Context-Free Path Querying} problem (CFPQ) was introduced by~\cite{Yannakakis}, and nowadays is used in many areas.
For example, CFPQ is used for interprocedural static code analysis~\citep{10.1145/3158118,10.5555/271338.271343, YanSCA, Zheng:2008:DAA:1328897.1328464}
In this area CFPQ is known as the context-free language reachability (\textit{the CFL-reachability}) problem.
Also, CFPQ can be used for biological data analysis~\citep{GraphQueryWithEarley}, graph segmentation in data provenance analysis~\citep{8731467}, and for data flow information preserving in machine learning based solutions for code analysis problems~\citep{10.1145/3428301}.

Many algorithms for CPFQ were proposed, but recently~\cite{Kuijpers:2019:ESC:3335783.3335791} showed that the state-of-the-art CFPQ algorithms are still not performant enough for practical use.
This motivates further research of the new algorithms for CFPQ.

One promising way to achieve high-performance solutions for graph analysis problems is to reduce them to linear algebra operations.
To facilitate this approach, the description of basic linear algebra primitives GraphBLAS~API~\citep{7761646} was proposed.
Evaluation of the libraries that implement this API, such as SuiteSparce~\citep{10.1145/3322125} and CombBLAS~\citep{10.1177/1094342011403516}, show that reduction to linear algebra is a good way to utilize high-performance parallel and distributed computations for graph analysis.

\cite{Azimov:2018:CPQ:3210259.3210264} showed how to reduce CFPQ to matrix multiplication.
Later, it was shown by~\cite{Mishin:2019:ECP:3327964.3328503} and~\cite{10.1145/3398682.3399163} that by using the appropriate libraries for linear algebra for Azimov's algorithm implementation one can create a practical solution for CFPQ.
However, Azimov's algorithm requires transforming of the input grammar to Chomsky Normal Form.
This leads to the grammar size increase and hence worsens performance, especially for regular queries and complex context-free queries.

To solve these problems, an algorithm based on automata intersection was proposed~\citep{10.1007/978-3-030-54832-2_6}.
This algorithm is based on linear algebra and does not require the transformation of the input grammar.
In this work, we improve this algorithm by reducing it to operations over Boolean matrices, thus simplifying its description and implementation.
Additionally, we added the support of all-paths query semantics.
Under the \textit{all-path query semantics}, a query is evaluated to all paths satisfying the conditions of the query.
All-paths semantics is necessary, for example, in biological data analysis~\citep{GraphQueryWithEarley}, where paths prove why the specified vertices are of interest (similar).
In static code analysis (e.g. alias analysis) paths indicate the reason why two names are aliases which can be used to generate a good error message to the user of the static analysis tool. Reporting all such reasons (all paths) makes for a shorter feedback loop as well as provides a more detailed analysis.

We also show that this algorithm is performant enough for regular queries, so it is a good candidate for integration with the real-world query languages: one algorithm can be used to evaluate both regular and context-free queries.
Having a unified environment simplifies the development of the querying tools by allowing for reuse of common optimizations for the querying algorithm.
Note that a real-world context-free query is likely to have a regular subquery which can be significant in size.
Our algorithm is capable to treat such regular subparts as a regular query thus imposing little overhead as compared to treating them as a generic context-free query.
This makes a unified solution more promising in terms of performance.

Moreover, we show that this algorithm opens the way to tackle a long-standing problem about the existence of truly-subcubic $O(n^{3-\epsilon})$ CFPQ algorithm ~\citep{10.1145/1328438.1328460, Yannakakis}.
Currently, the best result is an $O(n^3/\log{n})$ algorithm of~\cite{10.1145/1328438.1328460}.
Also, there exist truly subcubic solutions that use fast matrix multiplication for some fixed subclasses of context-free languages~\citep{8249039}.
Unfortunately, these solutions cannot be generalized to arbitrary CFPQs.
In this work, we identify incremental transitive closure as a bottleneck on the way to achieve subcubic time complexity for CFPQ.

To sum up, we make the following contributions.
\begin{enumerate}
	\item We rethink and improve the CFPQ algorithm based on tensor-product proposed by~\cite{10.1007/978-3-030-54832-2_6}.
	We reduce this algorithm to operations over Boolean matrices.
	As a result, all-path query semantics is handled, as opposed to the previous matrix-based solution capable of handling only the single-path semantics.
	Best to our knowledge, our algorithm is the first CFPQ algorithm based on linear algebra which is capable to handle all-path query semantics.
	Also, both regular and context-free grammars can be used as queries.
	\item
	We prove the correctness and time complexity for the proposed algorithm thus providing an upper bound on the complexity of the CFPQ problem in dependence on the size of the query (its context-free grammar) and the number of vertices in the input graph.
	The proposed algorithm has subcubic complexity in terms of the grammar and the input graph sizes, which is comparable with the state-of-the-art solutions.
	On the other hand, the algorithm does not require transforming the input grammar to Chomsky Normal Form.
	The transformation leads to at least a quadratic blow-up in grammar size, thus by avoiding the transformation, our algorithm achieves better time complexity   than other solutions in terms of the grammar size.
	\item We demonstrate the interconnection between CFPQ and incremental transitive closure.
	We show that incremental transitive closure is a bottleneck on the way to achieve a faster CFPQ algorithm for the general case of arbitrary graphs as well as for special families of graphs, such as planar graphs.
	\item We implement the described algorithm and evaluate it on real-world data for both RPQ and CFPQ.
	The evaluation shows that the proposed algorithm is comparable with the existing solutions for CFPQ and RPQ, thus the algorithm provides a promising way to handle both CFPQ and RPQ.
\end{enumerate}
\section{Preliminaries}

In this section, we introduce some basic notation and definitions from graph theory and formal language theory which will be used in the rest of the paper.

\subsection{Language-Constrained Path Querying Problem}

We use a directed edge-labeled graph as a data model.
To introduce the \term{Language-Constraint Path Querying Problem}~\citep{barrett2000formal} over directed edge-labeled graphs we first give definitions for both languages and grammars.

\begin{definition}
An \term{edge-labeled directed graph} $\mathcal{G}$ is a triple $\langle V,E,L \rangle$, where $V = \{0, \ldots, |V|-1\}$ is a finite set of vertices, $E \subseteq V \times L \times V$ is a finite set of edges and $L$ is a finite set of edge labels.
\end{definition}

The graph  $\mathcal{G}$  which we use in the further examples is presented in Figure~\ref{fig:example_input_graph}.

\begin{figure}[h!]
\tikzset{every loop/.style={min distance=10mm,in=0,out=70,looseness=5}}
    \centering
    \begin{tikzpicture}[shorten >=1pt,auto]
       \node[state] (q_0)                      {$0$};
       \node[state] (q_1) [right=of q_0]       {$1$};
        \path[->]
        (q_0) edge[bend left, above]   node {a} (q_1)
        (q_1) edge [bend left, below] node {a} (q_0)
        (q_1) edge[loop right] node {b} (q_1);
    \end{tikzpicture}
    \caption{The example of input graph $\mathcal{G}$}
    \label{fig:example_input_graph}
\end{figure}
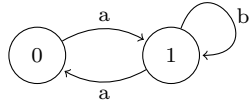

\begin{definition}
An \term{adjacency matrix} for an edge-labeled directed graph $\mathcal{G} = \langle V,E,L \rangle$ is a matrix $M$, where:
\begin{itemize}
    \item $M$ has size $|V|\times|V|$
    \item $M[i,j] = \{l~\mid e = (i,l,j) \in E\}$
\end{itemize}
\end{definition}

Adjacency matrix $M_2$ of the graph $\mathcal{G}$ is

$$
    M_2 =
    \begin{pmatrix}
    \emptyset & \{a\}     \\
   \{a\}   &   \{b\}     \\
    \end{pmatrix}.
$$

\begin{definition}
The \term{Boolean matrices decomposition},
for an edge-labeled directed graph $\mathcal{G} =
\langle V,E,L \rangle$ with adjacency matrix $M$ is a set of matrices $\mathcal{M} = \{ M^l~\mid~l \in L, \ M^l[i,j] = 1 \iff l \in M[i,j]\}$.
\end{definition}

In our work, we use the decomposition of the adjacency matrix into a set of Boolean matrices. As an example, matrix $M_2$ can be represented as a set of two Boolean matrices $M_2^a$ and $M_2^b$.
$$
\mathcal{M}_2 =\bigg\{
M_2^{a} =
\begin{pmatrix}
    . & 1     \\
    1 & .
\end{pmatrix},
M_2^{b} =
\begin{pmatrix}
    . & .    \\
    . & 1
\end{pmatrix}
\bigg\}.
$$

This way we reduce operations necessary for our algorithm from
operations over custom semiring (over edge labels) to operations over a Boolean semiring with an \term{addition} $+$ as a disjunction $\lor$ and a \term{multiplication} $\cdot$ as a conjunction $\land$ over Boolean values.

We also use the notation $\mathcal{M}(\mathcal{G})$ and $\mathcal{G}(\mathcal{M})$ to describe the Boolean decomposition matrices for some graph and the graph formed by its adjacency Boolean matrices.

\begin{definition}
A \term{path} $\pi$ in the graph $\mathcal{G} = \langle V,E,L \rangle$ is a sequence $e_0,e_1,\ldots,e_{n-1}$, where $e_i = (v_i,l_i,u_i) \in E$ and for any $e_i, e_{i+1}$: $u_i = v_{i+1}$. We denote a path from $v$ to $u$ as $v\pi u$.
\end{definition}

\begin{definition}
A \term{word formed by a path} $$\pi = (v_0,l_0,v_1),(v_1,l_1,v_2),\ldots,(v_{n-1},l_{n-1},v_n)$$ is a concatenation of labels along the path: $\omega(\pi) = l_0 l_1 \ldots l_{n-1}$.
\end{definition}

\begin{definition}
A \term{language} $\mathcal{L}$ over a finite alphabet $\Sigma$ is a subset of all possible sequences formed by symbols from the alphabet: $\mathcal{L}_{\Sigma} = \{\omega \mid \omega \in \Sigma^*\}$.
\end{definition}

Now we are ready to introduce language-constraint path querying problem for the given graph  $\mathcal{G} = \langle V,E,L \rangle$ and the given language $\mathcal{L}$ with reachability and all-path semantics.

\begin{definition}
To evaluate language-constraint path query with reachability semantics is to construct a set of pairs of vertices $(v_i,v_j)$ such that there exists a path $v_i \pi v_j$ in $\mathcal{G}$ which forms the word from the given language:
$$
R = \{(v_i,v_j) \mid \exists \pi: v_i \pi v_j, \ \omega(\pi) \in \mathcal{L} \}
$$
\end{definition}

\begin{definition}
To evaluate language-constraint path query with all-path semantics is to construct a set of paths $\pi$ in $\mathcal{G}$ which form the word from the given language:
$$
\Pi = \{ \pi \mid \omega(\pi) \in \mathcal{L}\}
$$
\end{definition}

Note that $\Pi$ can be infinite, thus in practice, we should provide a way to build a finite representation of such paths with reasonable complexity, instead of explicit construction of the $\Pi$.

\subsection{Regular Path Queries and Finite State Machine}

In \term{Regular Path Querying} (RPQ) the language $\mathcal{L}$ is regular.
This case is widespread and well-studied.
The most common way to specify regular languages is by \term{regular expressions}.

We use the following definition of regular expressions.
\begin{definition}
A \term{regular expression} over the alphabet $\Sigma$ is a finite combination of patterns, which can be defined as follows: $\varnothing$ (empty language), $\varepsilon$ (empty string), $a_i \in \Sigma$ are regular expressions, and if $R_1$ and $R_2$ are regular expressions, then $R_1 \mid R_2$ (alternation), $R_1 \cdot R_2$ (concatenation), $R_1^*$ (Kleene star) are also regular expressions.
\end{definition}

For example, one can use regular expression $R_1 = ab^*$ to search for paths in the graph $\mathcal{G}$ (Figure~\ref{fig:example_input_graph}).
The expected query result is a set of paths that start with an $a$-labeled edge and contain zero or more $b$-labeled edges after that.

In this work we use the notion of \term{Finite-State Machine} (FSM) or \term{Finite-State Automaton} (FSA) for RPQs.

\begin{definition}
A \term{deterministic finite-state machine without $\varepsilon$-transitions} $T$ is a tuple $\langle \Sigma, Q, Q_s, Q_f, \delta \rangle$, where:
\begin{itemize}
    \item $\Sigma$ is an input alphabet,
    \item $Q$ is a finite set of states,
    \item $Q_s \subseteq Q$ is a set of start (or initial) states,
    \item $Q_f \subseteq Q$ is a set of final states,
    \item $\delta: Q \times \Sigma \to Q$ is a transition function.
\end{itemize}
\end{definition}

It is well known, that every regular expression can be converted to deterministic FSM without $\varepsilon$-transitions~\citep{automata:theory:10.5555/1177300}.
We use FSM as a representation of RPQ.
FSM $T = \langle \Sigma, Q, Q_s, Q_f, \delta \rangle$ can be naturally represented by a directed edge-labeled graph $\mathcal{G} = \langle V,E,L \rangle$, where $V = Q$, $L = \Sigma$, $E = \{(q_i,l,q_j) \mid \delta(q_i,l) = q_j\}$ and some vertices are marked as the start and final states.
An example of the graph representation of FSM $T_1$ for the regular expression $R_1$ is presented in Figure~\ref{fig:example_fsm}.

\begin{figure}[h]
    \centering
    \begin{tikzpicture}[shorten >=1pt,auto]
       \node[state, initial] (q_0)                      {$0$};
       \node[state, accepting] (q_1) [right=of q_0] {$1$};
       \path[->]
        (q_0) edge  node {a} (q_1)
        (q_1) edge[loop above]  node {b} (q_1)
        ;
    \end{tikzpicture}
    \caption{The example of graph representation of FSM for regular expression $ab^*$}
    \label{fig:example_fsm}
\end{figure}
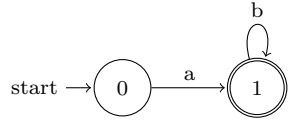

As a result, FSM also can be represented as a set of Boolean adjacency matrices $\mathcal{M}$ accompanied by the information about the start and final vertices.
For example, FSM $T_1$ can be represented as follows.
$$
M_1^a =
\begin{pmatrix}
.&1 \\
.&.
\end{pmatrix},~
M_1^b =
\begin{pmatrix}
.&. \\
.&1
\end{pmatrix}.
$$

Note, that an edge-labeled graph can be viewed as an FSM where edges represent transitions and all vertices are both start and final at the same time.
Thus RPQ evaluation is an intersection of two FSMs.
The query result can also be represented as FSM because regular languages are closed under intersection~\citep{automata:theory:10.5555/1177300}.

\subsection{Context-Free Path Querying and Recursive State Machines}

An even more general case than RPQ is a \term{Context-Free Path Querying Problem (CFPQ)}, where one can use context-free languages as constraints.
These constraints are more expressive than regular ones.
For example, a classic same-generation query can be expressed by a context-free language, but not a regular language \citep{databasebook}.

\begin{definition}
A \term{context-free grammar} $G$ is a tuple $\langle\Sigma, N, S, P\rangle$, where:
\begin{itemize}
    \item $\Sigma$ is a finite set of terminals (or terminal alphabet)
    \item $N$ is a finite set of nonterminals (or nonterminal alphabet)
    \item $S \in N$ is a start nonterminal
    \item $P$ is a finite set of productions (grammar rules) of form $N_i \to \alpha$ where  $N_i \in N$, $\alpha \in (\Sigma \cup N)^*$.
\end{itemize}
\end{definition}

\begin{definition}

    The \emph{size of the grammar} $G = \langle \Sigma, N, S, P \rangle$ is defined as the sum of the sizes of its productions:
    \[|G| = \sum_{p \in P} |p|,\]
    where the \term{size of a production} $p = N \to \alpha$ depends on the length of its right-hand side:
    \[|p| = 1 + |\alpha|.\]

\end{definition}

% Consider a context-free grammar in \term{extended Backus-Naur form (EBNF)}. EBNF is a code that expresses the grammar of a formal language. An EBNF consists of terminal symbols and non-terminal production rules which are the restrictions governing how terminal symbols can be combined into a legal sequence. Examples of terminal symbols include alphanumeric characters, punctuation marks, and whitespace characters. Thus, if its production rule contains a regular expression then the size of such a rule is equal to the sum of the sizes of production rules obtained by the conversion of the regular expression directly to the corresponding linear grammar with the eliminating $\varepsilon$-rules.

\begin{definition}
The sequence $\omega_2 \in (\Sigma \cup N)^*$ is \term{derivable} from $\omega_1 \in (\Sigma \cup N)^*$ \term{in one derivation step}, or $\omega_1 \to \omega_2$, in the grammar $G = \langle\Sigma, N, S, P\rangle$ iff $\omega_1=\alpha N_i \beta$, $\omega_2 = \alpha \gamma \beta$, and $N_i \to \gamma \in P$.
\end{definition}

\begin{definition}
Context-free grammar $G=\langle\Sigma, N, S, P\rangle$ specifies a \term{context-free language}: $\mathcal{L}(G) = \{\omega \mid S \xrightarrow{*} \omega \}$, where $(\xrightarrow{*})$ denotes zero or more derivation steps $(\to)$.
\end{definition}

For instance, the grammar $G_1 = \langle \{a,b\}, \{S\}, S, \{S \to a \ b; \ S \to a \ S \ b\} \rangle$ can be used to search for paths, which form words in the language $\mathcal{L}(G_1) = \{a^n b^n \mid n > 0\}$ in the graph $\mathcal{G}$ (Figure~\ref{fig:example_input_graph}).

%Regular expressions can be transformed to a FSM, and a context free grammar can be transformed to \textit{Recursive State Machine} (RSM) (also known as recursive networks~\cite{!!!}, recursive automata~\cite{!!!}, !!!.) in the similar way.

% Ref to  Rajeev Alur, Kousha Etessami, and Mihalis Yannakakis. 2001.
% Analysis of Recursive State Machines place here???
While a regular expression can be transformed to an FSM, a context-free grammar can be transformed to a \term{Recursive State Machine} (RSM) in a similar fashion.
In our work, we use the following definition of RSM based
on~\cite{rsm:analysis:10.1007/3-540-44585-4_18}.

\begin{definition}
A \term{recursive state machine} $R$ over a finite alphabet $\Sigma$ is defined as a tuple of elements $\langle B,m,\{C_i\}_{i \in B} \rangle$, where:

\begin{itemize}
    \item $B$ is a finite set of labels of boxes,
    \item $m \in B$ is an initial box label,
    \item Set of \term{component state machines} or \term{boxes},
          where $C_i=(\Sigma \cup B, Q_i,q_i^0,F_i,\delta_i)$:
    \begin{itemize}
        \item $\Sigma \cup B$ is a set of symbols, $\Sigma \cap B = \varnothing$,
        \item $Q_i$ is a finite set of states,
              where $Q_i \cap Q_j =  \varnothing, \forall i \neq j$,
        \item $q_i^0$ is an initial state for $C_i$,
        \item $F_i$ is a set of final states for $C_i$, where $F_i \subseteq Q_i$,
        \item $\delta_i: Q_i \times (\Sigma \cup B) \to Q_i$ is a transition function. %for $C_i$
    \end{itemize}
\end{itemize}

\end{definition}

\begin{definition}
    The \term{size of RSM} $|R|$ is defined as the sum of the number of states in all boxes.
\end{definition}

RSM behaves as a set of finite state machines (or FSM).
Each such FSM is called a \term{box} or a \term{component state machine}.
A box works similarly to the classic FSM, but it also handles additional \term{recursive calls} and employs an implicit \term{call stack} to \term{call} one component from another and then return execution flow back.

The execution of an RSM could be defined as a sequence of the configuration transitions, which are done while reading the input symbols.
The pair $(q_i,\mathcal{S})$, where $q_i$ is a current state for box $C_i$ and $\mathcal{S}$ is  a stack of \term{return states}, describes an \term{execution configuration}.

The RSM execution starts from the configuration $(q_m^0, \langle\rangle)$.
The following list of rules defines the machine transition from configuration $(q_i,\mathcal{S})$ to $(q',\mathcal{S}')$ on some input symbol $a$:

\begin{itemize}
    \item $(q_i^k,\mathcal{S}) \leadsto (\delta_i (q_i^k, a),\mathcal{S})$
    \item $(q_i^k,\mathcal{S}) \leadsto (q_j^0, \delta_i (q_i^k, j) \circ\mathcal{S})$
    \item $(q_j^k,q_i^t\circ \mathcal{S}) \leadsto (q_i^t, \mathcal{S}),$ where $q_j^k \in F_j$
    %\item $q' \gets q_j^0, S' \gets q_i^t \circ S$, where $q_i^t = \delta_i (q_i^k, j),  q_i^k = q, j \in M$
    %\item $q' \gets q_i^t, S' \gets S_{tail}$, where $S = q_i^t \circ S_{tail}, q_j^k = q, q_j^k \in F_j$
\end{itemize}

An input word $a_1 \dots a_n$ is accepted, if machine reaches configuration $(q,\langle\rangle)$, where $q \in F_m$.
Note, that an RSM makes nondeterministic transitions and does not read the input character when it \term{calls} some component or \term{returns}.

According to~\cite{rsm:analysis:10.1007/3-540-44585-4_18}, recursive state machines are equivalent to pushdown systems.
Since pushdown systems are capable of accepting context-free languages~\citep{automata:theory:10.5555/1177300}, RSMs are equivalent to context-free languages.
Thus RSMs are suitable to encode query grammars.
Any CFG $G=\langle\Sigma, N, S, P\rangle$ can be easily converted to an RSM $R_G$ with one box per nonterminal.
The box which corresponds to a nonterminal $N_i$ is constructed using the right-hand side of each rule for $N_i$.
Such conversion implies that $|R_G| = O(|G|) = O(|P|)$ because the number of states in $R_G$ does not exceed the sum of the lengths of every rule in $G$ in the worst case.

An example of such RSM $R$ constructed for the grammar $G_1$ with rules $S \to a S b \mid a b$ is provided in Figure~\ref{example:automata}.
For the given example of the grammar and the RSM, consider the following sequence of the machine configuration transitions, in the case where one wants to determine if the input word $aabb$ belongs to the language $L(G_1)$.
The RSM execution starts from configuration $(q_S^0,\langle \rangle)$, reads the symbol $a$ and goes the to $(q_S^1, \langle \rangle)$.
Then, in a nondeterministic manner it tries to read $b$ but fails, and in the same time tries to derive $S$ and goes to the configuration $(q_S^0, \langle q_S^2 \rangle)$, where $q_S^2$ is a \textit{return} state.
Then machine reads $a$ and goes to $(q_S^1, \langle q_S^2 \rangle)$. In this case, it fails to derive $S$ in the nondeterministic choice, but successfully reads $b$ and goes to the configuration $(q_S^3,\langle q_S^2 \rangle)$.
Since $q_S^3$ is a final state for the box $S$, the RSM tries to $return$ and goes to $(q_S^2,\langle \rangle)$.
Then it reads $b$ and transits to $(q_S^3,\langle \rangle)$.
Since $q_S^3 \in F_S$ and the stack of return states is empty, the machine accepts the input sequence $aabb$.
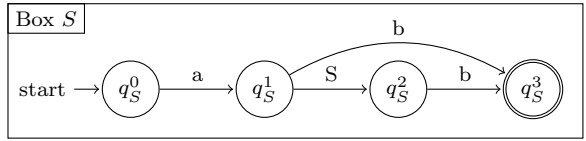
\begin{figure}[h]
    \begin{tikzpicture}[shorten >=1pt,auto]
        \node[state, initial] (q_0)   {$q_S^0$};
        \node[state] (q_1) [right=of q_0] {$q_S^1$};
        \node[state] (q_2) [right=of q_1] {$q_S^2$};
        \node[state, accepting] (q_3) [right=of q_2] {$q_S^3$};
        \path[->]
            (q_0) edge node {a} (q_1)
            (q_1) edge node {S} (q_2)
            (q_2) edge node {b} (q_3)
            (q_1) edge [bend left, above]  node {b} (q_3);
        \node (box) [draw=black, fit= (q_0) (q_1) (q_2) (q_3), xshift=-4ex, yshift=2ex, inner xsep=0.75cm, inner ysep=0.5cm] {};

        \node[draw=black, anchor=north west] at (box.north west) {Box $S$};
    \end{tikzpicture}
    \centering
    \caption{The recursive state machine $R$ for grammar $G_1$}
    \label{example:automata}
\end{figure}

Similarly to an FSM, an RSM can be represented as a graph and, hence, as a set of Boolean adjacency matrices.
For our example, $M_1$ for the RSM $R$ from Figure~\ref{example:automata} is:
    $$
    M_1 =
    \begin{pmatrix}
    \emptyset & \{a\} &   \emptyset &  \emptyset     \\
     \emptyset &  \emptyset & \{S\} & \{b\} \\
     \emptyset &  \emptyset &  \emptyset & \{b\}     \\
     \emptyset &  \emptyset &   \emptyset&   \emptyset
    \end{pmatrix}
    $$

Matrix $M_1$ can be represented as a set of Boolean matrices as follows:
{\small
\begin{align*}
M_1^S =
\begin{pmatrix}
    . & . & . & .   \\
    . & . & 1 & .   \\
    . & . & . & .   \\
    . & . & . & .
\end{pmatrix},~M_1^a =
\begin{pmatrix}
   . & 1 & . & .   \\
   . & . & . & .   \\
   . & . & . & .   \\
   . & . & . & .   \\
\end{pmatrix},~M_1^b =
\begin{pmatrix}
    . & . & . & .   \\
    . & . & . & 1   \\
    . & . & . & 1   \\
    . & . & . & .   \\
\end{pmatrix}
\end{align*}
}
Similarly to an RPQ, a CFPQ is the intersection of the given context-free language and an FSM specified by the given graph.
As far as every context-free language is closed under the intersection with regular languages~\citep{automata:theory:10.5555/1177300}, such intersection can be represented as an RSM.
Also, an RSM can be viewed as an FSM over $\Sigma \cup N$.
In this work, we use this point of view to propose a unified algorithm to evaluate both regular and context-free path queries with zero overhead for regular queries.

\subsection{Graph Kronecker Product and Machines Intersection}

In this section, we introduce the classic Kronecker product definition,
describe graph Kronecker product and its relation to Boolean matrices algebra,
and RSM and FSM intersection.

\begin{definition}
Given two matrices $A$ and $B$ of sizes $m_1 \times n_1$ and $m_2 \times n_2$
respectively, with element-wise product operation $\cdot$, the \term{Kronecker product} of these two matrices is a new matrix $C = A \otimes B$ of size $m_1 * m_2 \times n_1 * n_2$ and \[C[u * m_2 + v,n_2 * p + q] = A[u,p] \cdot B[v,q].\]
\end{definition}

It is worth to mention, that the Kronecker product produces blocked matrix $C$,
with total number of the blocks $m_1 * n_1$, where each block has size
$m_2 * n_2$ and is defined as $A[i,j] \cdot B$.
% (scalar to matrix)

\begin{definition}
\label{def:graph:product}
Given two edge-labeled directed graphs $\mathcal{G}_1=\langle V_1, E_1, L_1 \rangle$
and $\mathcal{G}_2=\langle V_2, E_2, L_2 \rangle$,
the \term{Kronecker product} of these two graphs is a edge-labeled directed graph
$\mathcal{G}=\mathcal{G}_1 \otimes \mathcal{G}_2$,
where $\mathcal{G}= \langle V, E, L \rangle$:
\begin{itemize}
    \item $V = V_1 \times V_2$
    \item $E = \{((u,v),l,(p,q)) \mid (u,l,p) \in E_1 \wedge (v,l,q) \in E_2 \}$
    \item $L = L_1 \cap L_2$
\end{itemize}
\end{definition}

The Kronecker product for graphs produces a new graph with a property
that if and only if some path $(u,v)\pi(p,q)$ exists in the result graph
then paths $u\pi_1p$ and $v\pi_2q$ exist in the input graphs,
and $\omega(\pi) = \omega(\pi_1) = \omega(\pi_2)$.
These paths $\pi_1$ and $\pi_2$ can easily be found from $\pi$ by its definition.

The Kronecker product for directed graphs can be described as
the Kronecker product of the corresponding adjacency matrices of graphs,
what gives the following definition:

\begin{definition}
Given two adjacency matrices $M_1$ and $M_2$ of sizes
$m_1 \times n_1$ and $m_2 \times n_2$ respectively
for some directed graphs $\mathcal{G}_1$ and $\mathcal{G}_2$,
the \term{Kronecker product} of these two adjacency matrices is the adjacency matrix $M$
of some graph $\mathcal{G}$, where $M$ has size $m_1 * m_2 \times n_1 * n_2$ and
\[M[u * m_2 + v,n_2 * p + q] = M_1[u,p] \cap M_2[v,q].\]
\end{definition}

By definition, the Kronecker product for adjacency matrices gives an
adjacency matrix with the same set of edges as in the resulting graph in the
Def.~\ref{def:graph:product}. Thus, $M(\mathcal{G}) = M(\mathcal{G}_1) \otimes
M(\mathcal{G}_2)$, where $\mathcal{G} = \mathcal{G}_1 \otimes \mathcal{G}_2$.

\begin{definition}
\label{def:fsm:intersection}
Given two finite state machines $T_1 = \langle \Sigma, Q^1, Q_S^1, Q_F^1, \delta^1 \rangle$ and $T_2 = \langle \Sigma, Q^2, Q_S^2, Q_F^2, \delta^2 \rangle$, the \term{intersection} of these two machines is a new FSM $T = \langle \Sigma, Q, Q_S, Q_F, \delta \rangle$, where:
\begin{itemize}
    \item $Q = Q^1 \times Q^2$
    \item $Q_S = Q_S^1 \times Q_S^2$
    \item $Q_F = Q_F^1 \times Q_F^2$
    \item $\delta: Q \times \Sigma \to Q$,
    $\delta (\langle q_1, q_2 \rangle, s) = \langle q_1', q_2' \rangle$, if $\delta(q_1,s)=q_1'$ and $\delta(q_2,s)=q_2'$
\end{itemize}
\end{definition}

According to~\cite{automata:theory:10.5555/1177300} an FSM intersection defines the machine for which $L(T) = L(T_1) \cap L(T_2)$.

The most substantial part of the intersection is the $\delta$ function construction for the new machine $T$.
Using adjacency matrices representation for FSMs, we can reduce the intersection to the Kronecker product of such matrices over Boolean semiring to some extent, since the transition function $\delta$ of the machine $T$ in the matrix form is exactly the same as the product result.
More precisely:

\begin{definition}
\label{def:bool:product}
Given two sets of Boolean adjacency matrices $\mathcal{M}_1$ and $\mathcal{M}_2$, the \term{Kronecker product} of these matrices is a new matrix
$\mathcal{M} = \mathcal{M}_1 \otimes \mathcal{M}_2$, where $\mathcal{M} = \{ M_1^a \otimes M_2^a~|~a \in \Sigma \}$ and the element-wise operation is a conjunction over Boolean values ($\wedge$).
\end{definition}

Applying the Kronecker product for both the FSM and the edge-labeled directed graph, we can intersect these objects as shown in Def.~\ref{def:bool:product}, since the graph could be interpreted as an FSM with transitions matrix represented as the Boolean adjacency matrix.

In this work, we show how to express RSM and FSM intersection in terms of
the Kronecker product and transitive closure over a Boolean semiring.
\section{Context-free path querying by Kronecker product}

In this section, we introduce the algorithm for context-free path querying which is based on the Kronecker product of Boolean matrices.
The algorithm solves all-pairs CFPQ problem with all-path semantics (according to~\cite{hellingsPathQuerying}).
The algorithm works in the following two steps.
\begin{enumerate}
\item \emph{Index creation}.
 During this step, the algorithm computes an index that contains the information necessary to restore paths for the given pairs of vertices.
 This index can be used to solve the reachability problem without extracting paths.
 Note that the index is finite even when the set of paths is infinite.
\item \emph{Paths extraction}.
All paths for the given pair of vertices can be enumerated by using the index.
Since the set of paths can be infinite, all paths cannot be enumerated explicitly, thus advanced techniques such as lazy evaluation are required for the implementation.
Nevertheless, a single path can always be extracted with standard techniques.
\end{enumerate}

In the following subsections, we describe these steps, prove the correctness of the algorithm, and provide time complexity estimations.
For the first step, we start by introducing a naive algorithm.
After that, we show how to achieve cubic time complexity by using a dynamic transitive closure algorithm and shave off a logarithmic factor to achieve the best known time complexity for the CFPQ problem.
We finish by providing a step-by-step example of query evaluation with the proposed algorithm.

\subsection{Index Creation Algorithm}

The \textit{index creation} algorithm outputs the final adjacency matrix for the input graph with all pairs of vertices which are reachable through some nonterminal in the input grammar $G$, as well as the index matrix which is to be used to extract paths in the \textit{path extraction} algorithm.

The algorithm is based on the generalization of the FSM intersection for an RSM,  and the edge-labeled directed input graph.
Since the RSM is composed as a set of FSMs, it could easily be presented as an adjacency matrix for some graph over the set of labels.
As shown in the Def.~\ref{def:bool:product}, we can apply Kronecker product for Boolean matrices to \textit{intersect} the RSM and the input graph to some extent.
But the RSM contains nonterminal symbols with the additional logic of \textit{recursive calls}, which requires a \textit{transitive closure} step to extract such symbols.

The core idea of the algorithm comes from the Kronecker product and transitive closure.
The algorithm boils down to the evaluation of the iterative Kronecker product for the adjacency matrix $\mathcal{M}_1$ of the RSM $R$ and the adjacency matrix $\mathcal{M}_2$ of the  input graph $\mathcal{G}$, followed by the transitive closure, extraction of nonterminals and updating the graph adjacency matrix $\mathcal{M}_2$.
Listing~\ref{tensor:cfpq:cubic} demonstrates the main steps of the algorithm.
\begin{algorithm}[h]
\floatname{algorithm}{Listing}
\begin{algorithmic}[1]
\footnotesize
\caption{Kronecker product based CFPQ using dynamic transitive closure}
\label{tensor:cfpq:cubic}
\Function{contextFreePathQuerying}{G, $\mathcal{G}$}
    % Input data preparation
   \State{$n \gets$ the number of nodes in $\mathcal{G}$}
    \State{$R \gets$ recursive automata for $G$}
    \State{$\mathcal{M}_1 \gets$ the set of adjacency matrices for $R$}
    \State{$\mathcal{M}_2, \mathcal{A}_2 \gets$ the sets of adjacency matrices for $\mathcal{G}$}
    %\State{$M_3 \gets$ The empty matrix}
    \State{$C_3, M_3 \gets$ the empty matrices of size $dim(\mathcal{M}_1)n \times dim(\mathcal{M}_1)n$}
    % Eps-transition handling for graph
    \For{$s \in 0..dim(\mathcal{M}_1)-1$}
        \For{$S \in \textit{getNonterminals}(R,s,s)$}
            \For{$i \in 0..dim(\mathcal{M}_2)-1$}
                \State{$M^S_2[i,i] \gets 1 $}
            \EndFor
        \EndFor
    \EndFor
    \While{$\mathcal{M}_2$ is changing}
        \State{$M_3' \gets \bigvee_{M^S \in \mathcal{M}_1 \otimes \mathcal{A}_2} M^S$}
        \Comment{Using only new edges from $\mathcal{A}_2$}
        \State{$M_3 \gets M_3 + M_3'$}
        \Comment{Updating the matrix for the Kronecker product result}
        %\State{$\mathcal{A}_2 \gets$ The empty matrix of size $n \times n$}
        \State{$\mathcal{A}_2 \gets$ The empty matrix}
        \State{$C_3' \gets$ The empty matrix of size $dim(\mathcal{M}_1)n \times dim(\mathcal{M}_1)n$}
        %\Comment{Evaluate Kroncker product}
        \For{$(i,j) \mid M_3'[i,j] \neq 0$}
            %\State{$C_3'[i,j] \gets 1$}
            \State{$C_3' \gets \textit{add}(C_3, C_3', i, j)$}
            \Comment{Updating the transitive closure}
            \State{$C_3 \gets C_3 + C_3'$}
        \EndFor
        %\State{$n \gets$ dim($M_3)$}
        %\Comment{Matrix $M_3$ size = $n \times n$}
        % Add non-terminals (possibly new)
        \For{$(i,j)\ |\ C_3'[i,j] \neq 0$}
                \State{$s, f \gets \textit{getStates}(C_3',i,j)$}
                \State{$x, y \gets \textit{getCoordinates}(C_3',i,j)$}
                \For{$S \in \textit{getNonterminals}(R,s,f)$}
                    \State{$M^S_2[x,y] \gets 1$}
                    \State{$A^S_2[x,y] \gets 1$}
                \EndFor
        \EndFor
    \EndWhile
\State \Return $\mathcal{M}_2, M_3$
\EndFunction
% \Function{add}{$C, C', i, j$}
%     \State{$C'[i,j] \gets {1}$}
%     \For{$(u,v) \mid C[u,i] = C[j,v] = 1, C[u,j] = C[u,v] = 0$}
%         \State{$C'[u,v] \gets {1}$}
%     \EndFor
%     \State \Return{$C'$}
% \EndFunction
\Function{getStates}{$C, i, j$}
    \State{$n \gets dim(\mathcal{M}_2)$}
   \Comment{$\mathcal{M}_2$ the set of adjacency matrices for $\mathcal{G}$}
    \State \Return{$\left\lfloor{i / n}\right\rfloor, \left\lfloor{j / n}\right\rfloor$}
\EndFunction
\Function{getCoordinates}{$C, i, j$}
    \State{$n \gets dim(\mathcal{M}_2)$}
    \State \Return{$i \bmod n, j \bmod n$}
\EndFunction
\end{algorithmic}
\end{algorithm}
\subsubsection{Application of Dynamic Transitive Closure}
The most time-consuming steps of the algorithm are the computations of the Kronecker product and transitive closure.
Note that the adjacency matrix $\mathcal{M}_2$ is changed incrementally i.e. elements (edges) are added to $\mathcal{M}_2$ at each iteration of the algorithm and are never deleted from it.
So it is not necessary to recompute the whole product or transitive closure if some appropriate data structure is maintained.

To compute the Kronecker product, we employ the fact that it is left-distributive.
Let $\mathcal{A}_2$ be a matrix with newly added elements and $\mathcal{B}_2$ be a matrix with all previously found elements, such that $\mathcal{M}_2 = \mathcal{A}_2 + \mathcal{B}_2$.
Then by left-distributivity of the Kronecker product we have $\mathcal{M}_1 \otimes \mathcal{M}_2 = \mathcal{M}_1 \otimes (\mathcal{A}_2 + \mathcal{B}_2) = \mathcal{M}_1\otimes \mathcal{A}_2 + \mathcal{M}_1 \otimes \mathcal{B}_2$.
Note that $\mathcal{M}_1 \otimes \mathcal{B}_2$ is known and is already in the matrix $\mathcal{M}_3$ and its transitive closure is also already in the matrix $C_3$, because it has been calculated at the previous iterations, so it is left to update some elements of $\mathcal{M}_3$ by computing $\mathcal{M}_1\otimes \mathcal{A}_2$.

The fast computation of transitive closure can be obtained by using an incremental dynamic transitive closure technique.
Let us describe the function $add$ from Listing \ref{tensor:cfpq:cubic}.
Let $C_3$ be a transitive closure matrix of the graph $G$ with $n$ vertices.
We use an approach by~\cite{IBARAKI198395} to maintain dynamic transitive closure.
The key idea of their algorithm is to recalculate reachability information only for those vertices which become reachable after insertion of a certain edge.

We have modified the algorithm to achieve a logarithmic speed-up in the following way.
For each newly inserted edge $(i, j)$ and every node $u \neq j$ of $G$ such that $C_3[u, i] = 1$ and $C_3[u, j]=0$, one needs to perform operation $C_3[u,v] = C_3[u, v] \wedge C_3[j, v]$ for every node $v$, where $1 \wedge 1 = 0 \wedge 0 = 1 \wedge 0 = 0$ and $0 \wedge 1 = 1$.
Notice that these operations are equivalent to the element-wise (Hadamard) product of two vectors of size $n$, where multiplication operation is denoted as $\wedge$. To check whether $C_3[u, i] = 1$ and $C_3[u, j]=0$ one needs to multiply two vectors: the first vector represents reachability of the given vertex $i$ from other vertices $\{u_1, u_2, ..., u_n\}$ of the graph and the second vector represents the same for the given vertex $j$. The operation $C_3[u, v] \wedge C_3[j, v]$ also can be reduced to the computation of the Hadamard product of two vectors of size $n$ for the given $u_k$. The first vector contains the information whether vertices  $\{v_1, v_2, ..., v_n\}$ of the graph are reachable from the given vertex $u_k$ and the second vector represents the same for the given vertex $j$. The element-wise product of two vectors can be calculated naively in time $O(n)$. Thus, the time complexity of the transitive closure can be reduced by speeding up the element-wise product of two vectors of size $n$.

To achieve logarithmic speed-up, we use the Four Russians' trick by \cite{arlazarov1970economical}.
Let us assume an architecture with word size $w= \theta(\log n)$.
First we split each vector into $n/\log n$ parts of size $\log n$.
Then we create a table $T$ such that $T(a, b)$ = $a \wedge b$ where $a, b \ \in {\{0,1\}}^{\log n}$.
This takes time $O(n^2 \log n)$, since there are $2^{\log n} = n$ variants of Boolean vectors of size $\log n$ and hence $n^2$ possible pairs of vectors $(a, b)$ in total, and each component takes $O(\log n)$ time.
Assuming constant-time logical operations on words, we can store a polynomial number of lookup tables (arrays) $T_i$ (one array for each vector of size $\log n$), such that given an index of a table $T_i$, and any $O(\log n)$ bit vector $b$, we can look up $T_i(b)$ in constant time. The index of each array $T_a$ is stored in array $T$, which can be accessed in constant time for a given $\log$-size vector $a$. Thus, we can calculate the product of two parts $a$ and $b$ of size $\log n$ in constant time using the table $T$.
There are $n/\log n$ such parts, so the element-wise product of two vectors of size $n$ can be calculated in time $O(n/\log n)$ with $O(n^2 \log n)$ preprocessing.

\begin{theorem}
    Let $\mathcal{G} =  \langle V,E,L\rangle$ be a graph and $G = \langle\Sigma, N, S, P\rangle$ be a grammar.
    Let $\mathcal{M}_{2}$ be a resulting adjacency matrix after the execution of the algorithm in Listing~\ref{tensor:cfpq:cubic}. Then for any valid indices $i, j$ and for each nonterminal $N_i \in N$ the following statement holds: the cell $M_{2,(k)}^{N_i}[i,j]$ contains $\{1\}$, iff there is a path $i\pi j$ in the graph $\mathcal{G}$ such that $ N_i \xrightarrow{*} l(\pi)$.
\end{theorem}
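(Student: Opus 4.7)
The plan is to prove the biconditional by splitting into soundness ($\Leftarrow$) and completeness ($\Rightarrow$) directions, exploiting the two natural induction structures: soundness by induction on the number of iterations of the \texttt{while} loop in Listing~\ref{tensor:cfpq:cubic}, and completeness by induction on the height of the derivation tree witnessing $N_i \xrightarrow{*} l(\pi)$.

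For soundness, I would maintain the loop invariant: whenever $M_2^{N_i}[x,y] = 1$ after some iteration, there is a path $x\pi y$ in $\mathcal{G}$ with $N_i \xrightarrow{*} l(\pi)$. The base case covers the initialization, where only diagonal entries are set for nullable nonterminals; these correspond to $N_i \xrightarrow{*} \varepsilon$ and the empty path. For the inductive step, a new entry $M_2^{N_i}[x,y]$ is set to $1$ exactly when $C_3$ records reachability from $(s,x)$ to $(f,y)$ in the Kronecker product graph, where $s = q_{N_i}^0$ and $f \in F_{N_i}$. By Definitions~\ref{def:graph:product} and~\ref{def:bool:product}, any such path projects to a synchronized walk through the box of $N_i$ and the current labeled graph. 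Each edge along this walk is either a terminal edge of $\mathcal{G}$ or a nonterminal edge previously added; by the inductive hypothesis the latter are justified by subderivations, so concatenating them yields a valid derivation $N_i \xrightarrow{*} l(\pi)$.

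For completeness, I induct on the height $h$ of the derivation tree. The base case $h = 1$ involves a production $N_i \to a_1 \ldots a_k$ whose right-hand side is a sequence of terminals; the required synchronized walk in the box of $N_i$ and $\mathcal{G}$ is already encoded in $\mathcal{M}_1 \otimes \mathcal{M}_2$ at the first iteration, and the transitive closure step exposes the corresponding $(s,x)\to(f,y)$ reachability, causing $M_2^{N_i}[x,y]$ to be set. For the inductive step, the derivation invokes subderivations for nonterminals on its right-hand side; by the induction hypothesis each sub-edge eventually appears in $\mathcal{M}_2$. Since $\mathcal{M}_2$ is bounded ($|N|\cdot|V|^2$ possible entries) and the algorithm iterates until no new edge is added, a later iteration recomputes the Kronecker product using the freshly inserted edges recorded in $\mathcal{A}_2$ and, via the transitive closure update, writes $1$ into $M_2^{N_i}[x,y]$.

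The main obstacle, in my view, is making the correspondence between paths in the Kronecker product graph and RSM executions sufficiently formal to serve both directions uniformly. The algorithm conflates two different kinds of ``transitions'': reading a terminal edge (a coincident-label step in both factors) and consuming a previously-derived nonterminal edge as a shortcut for a recursive call. I would isolate this as a lemma: for any two states $s, f$ in the same box of $R$ and vertices $x, y$ of $\mathcal{G}$, one has $C_3[(s,x),(f,y)] = 1$ iff there is a walk in $\mathcal{G}(\mathcal{M}_2)$ from $x$ to $y$ whose label sequence is accepted by the box starting at $s$ and ending at $f$. Once this lemma is established, together with the fact that Kronecker product and transitive closure are updated incrementally (from $\mathcal{A}_2$ rather than from scratch) without losing any entries, the two inductions above close the main theorem; the $\varepsilon$-productions simply need the dedicated initialization argument to match the empty-path base case.
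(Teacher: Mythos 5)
Your proposal is correct and follows essentially the same route as the paper: the paper's own proof is only the one-line sketch ``by induction on the height of the derivation tree obtained on each iteration,'' which is precisely your completeness direction, and your soundness direction (induction on while-loop iterations, using the correspondence between paths in the Kronecker product graph and words accepted by a box of the RSM) is the standard complementary argument the paper leaves implicit. Your isolated lemma relating $C_3[(s,x),(f,y)]$ to box-accepted walks in $\mathcal{G}(\mathcal{M}_2)$, and your attention to the $\varepsilon$-initialization and to the incremental product/closure not losing entries, supply exactly the details the published proof omits.
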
{}
\begin{proof}
    By induction on the height of the derivation tree obtained on each iteration.
\end{proof}{}

\begin{theorem}{}
\label{theorem: subcubic}
    Let $\mathcal{G} = \langle V,E,L \rangle$ be a graph and $G = \langle\Sigma, N, S, P\rangle$ be a grammar.
    The algorithm from Listing~\ref{tensor:cfpq:cubic} calculates resulting matrices $\mathcal{M}_2$ and $M_3$ in $O({|P|}^3n^3/\log (|P|n))$ time on a word RAM with word size $w= \theta(\log |P|n)$, where $n = |V|$. Moreover, maintaining of the dynamic transitive closure dominates the cost of the algorithm.
\end{theorem}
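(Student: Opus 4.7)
The plan is to break the running time into three components keyed to the main loop of Listing~\ref{tensor:cfpq:cubic}: (i) the Kronecker product step, (ii) the incremental transitive closure step, and (iii) the nonterminal extraction/update step, and then show that (ii) dominates at the stated bound. The key preliminary observation is monotonicity: entries are only ever added to $\mathcal{M}_2$, so across all iterations at most $|N|\cdot n^2 = O(|P|n^2)$ edges are ever inserted, and the while-loop terminates. This lets me amortize work against edge insertions rather than counting iterations.

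For the Kronecker product step I would use the left-distributivity identity $\mathcal{M}_1\otimes\mathcal{M}_2 = \mathcal{M}_1\otimes\mathcal{A}_2 + \mathcal{M}_1\otimes\mathcal{B}_2$ to justify that only $\mathcal{M}_1\otimes\mathcal{A}_2$ is recomputed each round. Charging every nonzero produced to a distinct cell of $M_3$ (of dimension $|R|n\times|R|n = O(|P|n)\times O(|P|n)$), the aggregate Kronecker cost across the whole run is $O(|P|^2n^2)$. The same bound covers step (iii), since the inner loop of lines~21--27 performs $O(1)$ work per newly set bit of $C_3'$ to update $\mathcal{M}_2$ and $\mathcal{A}_2$, and there are at most $O((|P|n)^2)$ such bits over the whole run.

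For the dominating transitive closure step I would apply the Ibaraki--Katoh scheme to the $N\times N$ matrix $C_3$ with $N=|R|n=O(|P|n)$. On each newly inserted edge $(i,j)$ the algorithm visits only vertices $u$ with $C_3[u,i]=1$ and $C_3[u,j]=0$, and for each such $u$ propagates reachability via an elementwise operation between the rows $C_3[u,\cdot]$ and $C_3[j,\cdot]$, each a bit-vector of length $N$. The classical charging argument shows the total number of such elementwise operations, summed over all edge insertions, is $O(N^2)$ because each pair $(u,v)$ can be set from $0$ to $1$ at most once. Each elementwise operation is accelerated from $O(N)$ to $O(N/\log N)$ by the Four Russians' trick already described: split the vectors into $N/\log N$ blocks of width $w = \theta(\log N)$ and consult the precomputed tables $T$ and $T_i$ in constant time per block on a word-RAM with word size $w$. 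The preprocessing cost $O(N^2\log N)$ is subdominant. Multiplying through gives the transitive-closure bound $O(N^3/\log N) = O(|P|^3 n^3/\log(|P|n))$, which swallows the other two components and yields the announced complexity.

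The step I expect to require the most care is the amortization of the Ibaraki--Katoh algorithm combined with the Four Russians' speedup. In particular I must verify two things: first, that the bookkeeping distinguishing ``already reachable'' from ``newly reachable'' cells lets one skip work on stale bits so that the $O(N^2)$ count of elementwise operations holds; and second, that the lookup tables $T$ and $T_i$ depend only on the word width $w$ and not on the evolving contents of $C_3$, so that they can be built once during preprocessing and reused across all $N$ row indices without reconstruction. With these two points in hand, the rest is routine aggregation of the per-step costs.
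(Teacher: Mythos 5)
Your proposal is correct and follows essentially the same route as the paper's proof: the same three-way decomposition into Kronecker product, incremental transitive closure, and extraction/update costs, the same amortization showing each row-update pair $(u,j)$ in the Ibaraki--Katoh scheme is charged at most once for a total of $O((|P|n)^2)$ vector operations, and the same Four Russians' acceleration of each elementwise vector operation to $O(|P|n/\log(|P|n))$ with subdominant $O(|P|^2n^2\log(|P|n))$ preprocessing. The only cosmetic difference is that you bound the aggregate Kronecker cost by charging to distinct cells of $M_3$, where the paper counts sparse-product work as $|\mathcal{M}_1|\sum_i|\mathcal{A}_2^{(i)}|$; both yield a subdominant term.
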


\begin{proof}
 Let $|\mathcal{A}|$ be the number of non-zero elements in a matrix $\mathcal{A}$. Consider the total time which is needed for computing the Kronecker products. The elements of the matrices $\mathcal{A}_2^{(i)}$ are pairwise distinct on every $i$-th iteration of the algorithm therefore the total number of operations is $\sum\limits_i{T(\mathcal{M}_1 \otimes \mathcal{A}_2^{(i)})} = |\mathcal{M}_1| \sum\limits_i {|\mathcal{A}_2^{(i)}|} = (|N| + |\Sigma|){|P|}^2 \sum\limits_i {|\mathcal{A}_2^{(i)}|} = O({(|N| + |\Sigma|)}^2{|P|}^2 n^2).$

Now we derive the time complexity of maintaining the dynamic transitive closure.
Notice that $C_3$ has a size of the Kronecker product of $\mathcal{M}_1 \otimes \mathcal{M}_2$, which is equal to $dim(\mathcal{M}_1)n \times dim(\mathcal{M}_1)n = |P|n \times |P|n$ so no more than ${|P|}^2n^2$ edges will be added during all iterations of the Algorithm.
Checking whether $C_3[u, i] = 1$ and $C_3[u, j]=0$ for every node $u \in V$ for each newly inserted edge $(i, j)$ requires one multiplication of vectors per insertion, thus total time is $O({|P|}^3n^3/\log (|P|n))$.
Note that after checking the condition, at least one element $C[u', j]$ changes value from 0 to 1 and then never becomes 0 for some $u'$ and $j$.
Therefore the operation $C_3[u',v] = C_3[u', v] \wedge C_3[j, v]$ for all $v \in V$ is executed at most once for every pair of vertices $(u',j)$ during the entire computation implying that the total time is equal to $O({|P|}^2n^2|P|n/\log (|P|n))=O({|P|}^3n^3/\log (|P|n))$, using the  multiplication of vectors.

The matrix $C_3'$ contains only new elements, therefore $C_3$ can be updated directly using only $|C_3'|$ operations and hence ${|P|}^2n^2$ operations in total.
The same holds for the loop in line 19 of the algorithm from Listing~\ref{tensor:cfpq:cubic}, because operations are performed only for non-zero elements of the matrix $|C_3'|$.
Finally, the time complexity of the algorithm is $O({(|N| + |\Sigma|)}^2{|P|}^2 n^2) + O({|P|}^2n^2) + O({|P|}^2n^2 \log (|P|n)) + O({|P|}^3n^3/\log (|P|n)) + O({|P|}^2n^2)= O({|P|}^3n^3/\log (|P|n))$. \qed
\end{proof}

The complexity analysis of the Algorithm~\ref{tensor:cfpq:cubic} shows that the maintaining of the incremental transitive closure dominates the cost of the algorithm. Thus, CFPQ can be solved in truly subcubic $O(n^{3-\varepsilon})$ time if there exists an incremental dynamic algorithm for the transitive closure for a graph with $n$ vertices with preprocessing time $O(n^{3-\varepsilon})$ and total update time $O(n^{3-\varepsilon})$. Unfortunately, such an algorithm is unlikely to exist: it was proven by~\cite{10.1145/2746539.2746609} that there is no incremental dynamic transitive closure algorithm for a graph with $n$ vertices and at most $m$ edges with preprocessing time $poly(m)$, total update time $mn^{1-\varepsilon}$, and query time $m^{\delta-\varepsilon}$ for any $\delta \in (0, 1/2]$ per query that has an error probability of at most 1/3 assuming the widely believed Online Boolean Matrix-Vector Multiplication (OMv) Conjecture. OMv Conjecture introduced by~\cite{10.1145/2746539.2746609} states that for any constant $ \varepsilon>0$, there is no $O(n^{3-\varepsilon})$-time algorithm that solves OMv with an error probability of at most 1/3.

% In this section, we introduce the algorithm for context-free path querying.
%The algorithm determines the existence of a path, which forms a sentence of the language defined by
%the input RSM $R$, between each pair of vertices in the directed edge-labeled graph $\mathcal{G}$.
%The algorithm is based on the generalization of the FSM intersection for an RSM,
%and an input graph. Since a graph can be interpreted as a FSM, in which
%transitions correspond to the labeled edges between vertices of the graph,
%and an RSM is composed of a set of FSMs, the intersection of such machines
%can be computed using the classic algorithm for FSM intersection, presented
%in~\cite{automata:theory:10.5555/1177300}.
%Such a way of generalization leads to zero-overhead algorithm for RPQ, contrary to other algorithms which require regular expression to context-free grammar transformation.

%The intersection can be computed as a Kronecker product of the corresponding
%adjacency matrices for an RSM and a graph. Since we are only determining the
%reachability of vertices, it is enough to represent intersection result as
%a Boolean matrix. It simplifies the algorithm implementation and allows
%one to express it in terms of basic matrix operations.
% TODO: more accurate upper bound for the algorithm complexity

\subsubsection{Index creation for RPQ}
In the case of the RPQ, the main \textbf{while} loop takes only one iteration to actually append data.
Since the input query is provided in the form of the regular expression, one can construct the corresponding RSM which consists of the single \textit{component state machine}.
This CSM is built from the regular expression and is labeled as $S$, for example, and has no \textit{recursive calls}.
The~adjacency matrix of the machine is built over $\Sigma$ only.
Therefore, during the calculation of the Kronecker product, all relevant information is taken into account at the first iteration of the loop.
\subsection{Paths Extraction Algorithm}
After the index has been created, one can enumerate all paths between specified vertices.
The index $M_3$ already stores data about all paths derivable from nonterminals.
This data can be used to construct these paths. However, the set of such paths can be infinite.
From a practical perspective, it is necessary to use lazy evaluation or limit the resulting set of paths in some other way.
For example, one can try to query some fixed number of paths, query paths of fixed maximum length, or just query a single path.
The problem of paths enumeration, namely how to enumerate paths with the smallest delay, may be relevant here.

\begin{algorithm}[h]
	\floatname{algorithm}{Listing}
	\begin{algorithmic}[1]
		\footnotesize
		\caption{Paths extraction algorithm}
		\label{tensor:pathsExtraction}
		\State{$M_3 \gets $ the result of index creation algorithm: final Kronecker product}
		\State{$R \gets$ recursive automata for the input RSM}
		\State{$\mathcal{M}_1 \gets $  the set of adjacency matrices for $R$}
		\State{$\mathcal{M}_2 \gets $ the set of adjacency matrices of the final graph}

		\Function{getPaths}{$v_s, v_f, N$}
		\State{$q_N^0 \gets$ Start state of automata for $N$}
		\State{$F_N \gets$ Final states of automata for $N$}
		\State{$paths\gets \bigcup\limits_{q_N^f \in F_N} \Call{genPaths}{(q_N^0,v_s),(q_N^f, v_f)}$}
		\State{$resultPaths \gets \emptyset$}
		\For{$path \in paths$}
		\State{$currentPaths \gets \emptyset$}
		\For{$((s_i, v_i), (s_j, v_j)) \in path$}
		\State{\begin{minipage}[t]{0.2\textwidth}
				\vspace{-13pt}
				\begin{align*}
					currentSubPaths \gets & \{(v_i,t,v_j) \mid M_2^t[v_i, v_j] \wedge M_1^t[s_i, s_j]\}\\
					& \cup \ \bigcup_{\{N \mid  M_2^N[v_i, v_j] \wedge M_1^N[s_i, s_j]\}}\Call{getPaths}{v_i, v_j, N}
				\end{align*}
			\end{minipage}
		}
		\State{$currentPaths \gets currentPaths \cdot currentSubPaths$}
		\Comment{Concatenation of paths}
		\EndFor
		\State{$resultPaths \gets resultPaths \cup currentPaths$}
		\EndFor

		\State \Return $resultPaths$
		\EndFunction

		% note: the first index in the pair is the state of the RSM
		% note: the second index in the pair is the vertex of the graph

		\Function{genPaths}{$(s_i,v_i), (s_j,v_j)$}
		\State{$q \gets \text{vector of zeros with size } dim(M_3)$}
		\Comment{Vector for indicating the current vertex}
		\State{$q[s_i dim(\mathcal{M}_2) + v_i] \gets 1$}
		\State{$resultPaths \gets \emptyset$}
		\State{$supposedPaths \gets \{([~], q)\}$}
		\Comment{Set of pairs: path and the vector for current vertex}
		\While{$supposedPaths$ is changing}
		\For{$(path, q) \in supposedPaths$}
		\If{$q[s_j dim(\mathcal{M}_2) + v_j] = 1$}
		\State{$resultPaths \gets resultPaths \cup path$}
		\EndIf

		\State{$q \gets q \cdot (M_3)^T$}
		\Comment{Boolean vector-matrix multiplication}

		\State{$\text{Remove } (path, q)\text{ from } supposedPaths$}
		\For{$j \text{ such that } q[j] = 1$}
		\State{$pathNew \gets path$}
		\If{$path \text{ is empty path}$}
		\State{$pathNew \gets pathNew \cdot [\big((s_i, v_i), (\left\lfloor{j / dim(\mathcal{M}_2)}\right\rfloor, j \bmod dim(\mathcal{M}_2))\big)]$}
		%\Comment{Edge adding}
		\Else
		\State{$(s_k, v_k) \gets \text{ the last vertex of } path$}
		\State{$pathNew \gets pathNew \cdot [\big((s_k, v_k), (\left\lfloor{j / dim(\mathcal{M}_2)}\right\rfloor, j \bmod dim(\mathcal{M}_2))\big)]$}
		%\Comment{Edge adding}
		\EndIf
		\State{$qNew \gets \text{vector of zeros with size } dim(M_3)$}
		\State{$qNew[j] \gets 1$}
		\State{$\text{Add } (pathNew, qNew)\text{ to } supposedPaths$}
		\EndFor
		\EndFor
		\EndWhile
		\State \Return $resultPaths$
		\EndFunction
	\end{algorithmic}
\end{algorithm}

The most natural way to use the created index is to query paths between the specified vertices derivable from the specified nonterminal. To do so, we provide a function \textsc{getPaths}($v_s, v_f, N$), where $v_s$ is a start vertex of the graph, $v_f$ --- the final vertex, and $N$ is a nonterminal.
Implementation of this function is presented in Listing~\ref{tensor:pathsExtraction}.

Paths extraction is implemented as two functions.
The entry point is \textsc{getPaths}($v_s, v_f, N$).
This function returns a set of the paths in the input graph between $v_s$ and $v_f$ such that the word formed by a path is derivable from the nonterminal $N$.

To compute such paths, it is necessary to compute paths from vertices of the form $(q_N^0,v_s)$ to vertices of the form $(q_N^f, v_f)$ in the resulting Kronecker product $M_3$, where $q_N^0$ is an initial state of RSM for $N$ and $q_N^f$ is one of the final states.
For this reason, we provide the function \textsc{genPaths}$((s_i,v_i),(s_j,v_j))$. This function explores the graph corresponding to the resulting Kronecker product $M_3$ in a breadth-first manner and return all paths from $(s_i,v_i)$ to $(s_j,v_j)$. For each path, we also store the vector $q$ that indicates which vertex is current now. We find the next vertices using the Boolean vector-matrix multiplication in line 26 of the algorithm from Listing~\ref{tensor:pathsExtraction}. After that, we add new edges to our paths in lines 31 and 34. If we reach the vertex $(s_j,v_j)$ then we can add the collected path to the resulting set (see lines 24-25). The paths constructed by \textsc{genPaths} is used to construct the corresponding paths in the input graph. Each edge $((s_i,v_i),(s_j,v_j))$ corresponds to set of paths in the input graph. This set is computed in line 13 and is used as subpaths for constructing the resulting paths in line 14. Note that in lines 14, 31, and 34 we use the operation $\cdot$
which naturally generalizes the path concatenation operation by constructing all possible concatenations of path pairs from the given two sets. Finally, if a single-edge subpath is labeled by a terminal, then the corresponding edge should be added to the result and if the label is nonterminal, \textsc{getPaths} should be used to restore paths.

It is assumed that the sets are computed lazily, so as to ensure the termination in case of an infinite number of paths.
We also do not check paths for duplication explicitely, since they are assumed to be represented as sets.
\subsection{An example}
\label{example:section}
In this section, we introduce a detailed example to demonstrate the steps taken by the proposed algorithms.
Namely, consider the graph $\mathcal{G}$ presented in Figure~\ref{fig:example_input_graph} and the RSM $R$ presented in Figure~\ref{example:automata}.

In the first step, we represent both the graph and RSM as a set of Boolean matrices.
Notice that we should add a new empty matrix $M_2^{S}$ to $\mathcal{M}_2$,
where edges labeled by $S$ will be added at the time of the computation.

After the initialization, the algorithm handles the $\varepsilon$-case.
The input RSM does not have any $\varepsilon$-transitions and does not have any states that are both start and final, therefore, no edges are added at this stage.
After that, we should compute $\mathcal{M}_2$ and $C_3$ iteratively.
We denote the iteration number of the loop of matrices evaluation as the number in parentheses in the subscript.

\textbf{The first iteration.} First of all, we compute the Kronecker product of the
$\mathcal{M}_1$ and $\mathcal{M}_{2,(0)}$ matrices and collapse the result to the single Boolean matrix
$M_{3,(1)}$. For the sake of simplicity, we provide only
$M_{3,(1)}$, which is evaluated as follows.
{
    \renewcommand{\arraystretch}{0.5}
    \setlength\arraycolsep{0.1pt}
\begin{align*}
  \centering
& M_{3,(1)} = M_1^a \otimes M_{2,(0)}^a +  M_1^b \otimes M_{2,(0)}^b + M_1^S \otimes M_{2,(0)}^S =\\
& \kbordermatrix{
          & (0,0) & (0,1) & \vrule & (1,0) & (1,1) & \vrule &  (2,0) & (2,1) & \vrule &  (3,0) & (3,1) &\\
    (0,0) & . & .  & \vrule & . & 1  & \vrule & . & .  &  \vrule & . & .  \\
    (0,1) & . & .  & \vrule & 1 & .   & \vrule & . & .  &  \vrule & . & .  \\
    \hline
    (1,0) & . & .   & \vrule & . & .  & \vrule & . & .  & \vrule & . & . \\
    (1,1) & . & .   & \vrule & . & .  & \vrule & . & .  & \vrule & .  &1   \\
    \hline
    (2,0) & . & .   & \vrule & . & .  & \vrule & . & .  & \vrule & . & .  \\
    (2,1) & . & .   & \vrule & . & .  & \vrule & . & .  & \vrule & . &1  \\
    \hline
    (3,0) & . & .   & \vrule & . & .  & \vrule & . & .  & \vrule & . & .  \\
    (3,1) & . & .   & \vrule & . & .  & \vrule & . & .  & \vrule & . & .  \\
}
\end{align*}
}

As far as the input graph has no edges with label $S$, the correspondent block of the Kronecker product will be empty. The Kronecker product graph of the input graph $\mathcal{G}$ and RSM $R$ is shown in Figure~\ref{fig:example_1_product}. Then, the transitive closure evaluation result, stored in the matrix $C_{3,(1)}$, introduces one new path of length 2 (the thick edges in Figure~\ref{fig:example_1_product}).

\begin{figure}[h]
    \centering
   \begin{tikzpicture}[->,auto,node distance=0.5cm]
       \node[state] (q_0)                      {$(0, 0)$};
       \node[state] (q_1) [right=of q_0] {$(1, 0)$};
       \node[state] (q_2)  [right=of q_1] {$(2, 0)$};
       \node[state] (q_3) [right=of q_2] {$(3, 0)$};
       \node[state] (q_4)  [below=of q_0] {$(0, 1)$};
       \node[state] (q_5)  [right=of q_4] {$(1, 1)$};
      \node[state] (q_6)  [right=of q_5] {$(2, 1)$};
       \node[state] (q_7)  [right=of q_6] {$(3, 1)$};
       \path[->]
        (q_0) edge[very thick]  node {} (q_5)
        (q_4) edge  node {} (q_1)
        (q_5) edge [bend left, out=40, in=140, below, very thick]  node {} (q_7)
        (q_6) edge   node {} (q_7)
        ;
    \end{tikzpicture}
    \caption{The Kronecker product graph of RSM $R$ and the input graph $\mathcal{G}$ (edges which form new paths are thick)}
    \label{fig:example_1_product}
\end{figure}
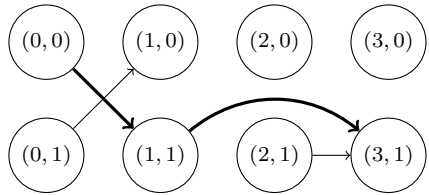

This path starts in the vertex $(0,0)$ and finishes in the vertex $(3,1)$.
We can see, that 0 and 3 are the start and final states of some component
state machine for label $S$ in $R$ respectively. Thus we can conclude that
there exists a path between vertices 0 and 1 in the graph, such that the
respective word is derivable from $S$ in the $R$ execution flow.

As a result, we can add the edge $(0,S,1)$ to the resulting graph, what is done by updating the matrix $M_2^S$.

\textbf{The second iteration.} The modified graph Boolean adjacency matrices contain
an edge with label $S$. Therefore, this label contributes to the non-empty
corresponding matrix block in the evaluated matrix $M_{3,{2}}$. The transitive closure
evaluation introduces three new paths $(0, 1) \rightarrow (2,1), (1, 0) \rightarrow (3,1)$ and $(0, 1) \rightarrow (3,1)$ (see Figure~\ref{fig:example_2_product}). Since only the path between vertices $(0,1)$ and
$(3,1)$ connects the start and final states in the automaton, the edge $(1,S,1)$ is added to the resulting graph.
{
    \renewcommand{\arraystretch}{0.5}
    \setlength\arraycolsep{0.1pt}
\begin{align*}
  \centering
& M_{3,(2)} = M_1^a \otimes M_{2,(2)}^a +  M_1^b \otimes M_{2,(2)}^b + M_1^S \otimes M_{2,(2)}^S = \\
& \kbordermatrix{
          & (0,0) & (0,1) & \vrule & (1,0) & (1,1) & \vrule &  (2,0) & (2,1) & \vrule &  (3,0) & (3,1) &\\
    (0,0) & . & .  & \vrule & . & 1  & \vrule & . & .  &  \vrule & . & .  \\
    (0,1) & . & .  & \vrule & 1 & .   & \vrule & . & .  &  \vrule & . & .  \\
    \hline
    (1,0) & . & .   & \vrule & . & .  & \vrule & . & \mc  & \vrule & . & . \\
    (1,1) & . & .   & \vrule & . & .  & \vrule & . & .  & \vrule & .  &1   \\
    \hline
    (2,0) & . & .   & \vrule & . & .  & \vrule & . & .  & \vrule & . & .  \\
    (2,1) & . & .   & \vrule & . & .  & \vrule & . & .  & \vrule & . &1  \\
    \hline
    (3,0) & . & .   & \vrule & . & .  & \vrule & . & .  & \vrule & . & .  \\
    (3,1) & . & .   & \vrule & . & .  & \vrule & . & .  & \vrule & . & .  \\
}
\end{align*}
}
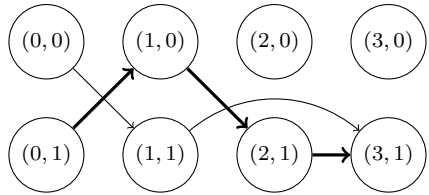
\begin{figure}[h]
    \centering
   \begin{tikzpicture}[->,auto,node distance=0.5cm]
       \node[state] (q_0)                      {$(0, 0)$};
       \node[state] (q_1) [right=of q_0] {$(1, 0)$};
       \node[state] (q_2)  [right=of q_1] {$(2, 0)$};
       \node[state] (q_3) [right=of q_2] {$(3, 0)$};
       \node[state] (q_4)  [below=of q_0] {$(0, 1)$};
       \node[state] (q_5)  [right=of q_4] {$(1, 1)$};
      \node[state] (q_6)  [right=of q_5] {$(2, 1)$};
       \node[state] (q_7)  [right=of q_6] {$(3, 1)$};
       \path[->]
        (q_1) edge[very thick] node {} (q_6)
        (q_0) edge node {} (q_5)
        (q_4) edge[very thick]  node {} (q_1)
        (q_5) edge [bend left, in=140, out=40, below]  node {} (q_7)
        (q_6) edge[very thick]   node {} (q_7)
        ;
    \end{tikzpicture}
    \caption{The Kronecker product graph of RSM $R$ and the updated graph $\mathcal{G}$ (edges which form new paths are thick)}
    \label{fig:example_2_product}
\end{figure}
The result graph is presented in Figure~\ref{fig:example_result}.
\begin{figure}[h]
    \centering
    \begin{tikzpicture}[shorten >=1pt,auto]
       \node[state] (q_0)                      {$0$};
       \node[state] (q_1) [right=of q_0]       {$1$};
        \path[->]
        (q_0) edge[bend left, above]   node {a} (q_1)
         (q_0) edge[in=190, out=-10, red, very thick]   node {S} (q_1)
        (q_1) edge [bend left, below] node {a} (q_0)
         (q_1) edge[loop above, red, thick] node {S} (q_1)
        (q_1) edge[loop right] node {b} (q_1);

    \end{tikzpicture}
    \caption{The result graph $\mathcal{G}$}
    \label{fig:example_result}
\end{figure}
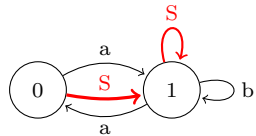

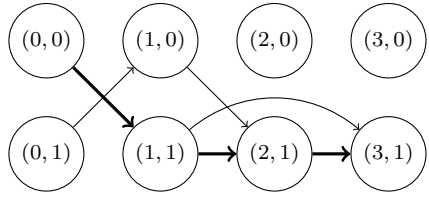
\begin{figure}[h]
	\centering
	\begin{tikzpicture}[->,auto,node distance=0.5cm]
		\node[state] (q_0)                      {$(0, 0)$};
		\node[state] (q_1) [right=of q_0] {$(1, 0)$};
		\node[state] (q_2)  [right=of q_1] {$(2, 0)$};
		\node[state] (q_3) [right=of q_2] {$(3, 0)$};
		\node[state] (q_4)  [below=of q_0] {$(0, 1)$};
		\node[state] (q_5)  [right=of q_4] {$(1, 1)$};
		\node[state] (q_6)  [right=of q_5] {$(2, 1)$};
		\node[state] (q_7)  [right=of q_6] {$(3, 1)$};
		\path[->]
		(q_1) edge node {} (q_6)
		(q_0) edge[very thick] node {} (q_5)
		(q_4) edge  node {} (q_1)
		(q_5) edge [bend left, in=140, out=40, below]  node {} (q_7)
		(q_5) edge[very thick]   node {} (q_6)
		(q_6) edge[very thick]   node {} (q_7)
		;
	\end{tikzpicture}
	\caption{The Kronecker product graph of RSM $R$ and the final graph $\mathcal{G}$ (edges which form new paths are thick)}
	\label{fig:example_3_product}
\end{figure}

No more edges will be added to the graph $\mathcal{G}$ at the last iteration.
However, the new edge $(1, 1) \rightarrow (2,1)$ will be added to the resulting Kronecker product, and the transitive closure
evaluation introduces one new path $(0, 0) \rightarrow (3,1)$ that connects the start and final states in the automaton (see Figure~\ref{fig:example_3_product}).
At this point, the index creation is finished.
One can use it to answer reachability queries, but it also can be used
to restore paths for some reachable vertices. The resulting Kronecker product matrix
$M_3$, or so-called \textit{index}, can be used for it.
For example, we can restore paths from vertex 1 to vertex 1 derived from $S$ in the resulting graph.

To get these paths we should call \verb|getPaths(1, 1, S)| function.
A partial trace of this call is presented in Figure~\ref{trc:example}.
First, we query paths for all possible start and final states of the
machine for the provided graph vertices.
Since the component state machine with label $S$ in the example RSM has the single final state, the function \verb|genPaths| is called with the arguments $(0,1)$ and $(3,1)$.
Note, that the values passed to the functions in the path extraction algorithm are the
pairs of the machine state and graph vertex, which uniquely identify a cell of
the index matrix $M_3$. As a result,
we get the set of all possible paths in the graph from $1$ to $1$ derived from $S$.

\begin{figure}
\begin{minipage}[t]{0.48\textwidth}
{
\scriptsize
\setlength{\DTbaselineskip}{8pt}
\DTsetlength{0.2em}{0.5em}{0.2em}{0.4pt}{1.6pt}
\dirtree{%
.1 getPaths($1,1,S$).
.2 genPaths($(0,1),(3,1)$).
.3 return $\{[((0,1),(1,0)), ((1,0),(2,1)), ((2,1),(3,1))]\}$.
.2 currentSubPaths = $\{[1 \xrightarrow{a} 0]\}$.
.2 currentPaths = $\{[(1 \xrightarrow{a} 0)]\}$.
.2 getPaths($0,1,S$).
.3 genPaths($(0,0),(3,1)$).
.4 return $\{[((0,0),(1,1)), ((1,1),(3,1))], [((0,0),(1,1)), ((1,1),(2,1)), ((2,1),(3,1))]\}$.
.3 path = $[((0,0),(1,1)), ((1,1),(3,1))]$.
.3 currentSubPaths = $\{[0 \xrightarrow{a} 1]\}$.
.3 currentPaths = $\{[0 \xrightarrow{a} 1]\}$.
.3 currentSubPaths = $\{[1 \xrightarrow{b} 1]\}$.
.3 $\cdots$.
.3 resultPaths = $\{[0 \xrightarrow{a} 1 \xrightarrow{b} 1]\}$.
.3 path = $[((0,0),(1,1)), ((1,1),(2,1)), ((2,1),(3,1))]$.
.3 currentSubPaths = $\{[0 \xrightarrow{a} 1]\}$.
.3 currentPaths = $\{[0 \xrightarrow{a} 1]\}$.
.3 getPaths(1, 1, S) // \begin{minipage}[t]{14cm} An alternative way to get paths from 1 to 1 (leads to infinite set of paths) \end{minipage}.
.3 $\cdots$.
.8 return $r_\infty^{1 \leadsto 1}$ // \begin{minipage}[t]{5cm} An infinite set of path from 1 to 1 \end{minipage}.
.3 currentPaths = $\{[0 \xrightarrow{a} 1] \} \cdot r_\infty^{1\leadsto 1}$.
.3 currentSubPaths = $\{[1 \xrightarrow{b} 1]\}$.
.3 $\cdots$.
.3 return $\{[0 \xrightarrow{a} 1 \xrightarrow{b} 1]\} \cup (\{[0 \xrightarrow{a} 1] \} \cdot r_\infty^{1\leadsto 1} \cdot \{[1 \xrightarrow{b} 1]\})$.
.2 currentPaths = $\{[1 \xrightarrow{a} 0 \xrightarrow{a} 1 \xrightarrow{b} 1]\} \cup (\{[1 \xrightarrow{a} 0 \xrightarrow{a} 1] \} \cdot r_\infty^{1\leadsto 1} \cdot \{[1 \xrightarrow{b} 1]\})$.
.2 currentSubPaths = $\{[1 \xrightarrow{b} 1]\}$.
.2 $\cdots$.
.2 return = $\{[1 \xrightarrow{a} 0 \xrightarrow{a} 1 \xrightarrow{b} 1 \xrightarrow{b} 1]\} \cup (\{[1 \xrightarrow{a} 0 \xrightarrow{a} 1] \} \cdot r_\infty^{1\leadsto 1} \cdot \{[1 \xrightarrow{b} 1 \xrightarrow{b} 1]\})$.
}
}
\caption{Example of call stack trace}
\label{trc:example}
\end{minipage}
\end{figure}
\section{Evaluation}

The goal of this evaluation is to investigate the applicability of the proposed algorithm to both regular and context-free path querying.
We measured the execution time of the index creation which solves the reachability problem for both kinds of queries.
The execution time for CFPQ was compared with Azimov's algorithm for CFPQ reachability.
We also investigated the practical applicability of the paths extraction algorithm to both regular and context-free path queries.

For evaluation, we used a PC with Ubuntu 18.04 installed.
It has Intel core i7-6700 CPU, 3.4GHz, and DDR4 64Gb RAM.
We only measure the execution time of the algorithms themselves, thus we assume an input graph is loaded into RAM in the form of its adjacency matrix in the sparse format.
Note, that the time needed to load an input graph into the RAM is excluded from the time measurements.

\subsection{RPQ Evaluation}

To investigate the applicability of the proposed algorithm for regular path querying we gathered a dataset that consists of both real-world and synthetically generated graphs.
We generated the queries from the most popular RPQ templates.

\subsubsection{Dataset}

We gathered several graphs that represent real-world data from different areas and are frequently used for the evaluation of the graph querying algorithms.
Namely, the dataset consists of three parts.
The first part is the set of LUBM graphs\footnote{Lehigh University Benchmark (LUBM) web page: \url{http://swat.cse.lehigh.edu/projects/lubm/}. Access date: 07.07.2020.}~\citep{10.1016/j.websem.2005.06.005} which have different numbers of vertices.
The second one is the set of graphs from Uniprot database\footnote{Universal Protein Resource (UniProt) web page: \url{https://www.uniprot.org/}. All files used can be downloaded via the link: \url{ftp://ftp.uniprot.org/pub/databases/uniprot/current_release/rdf/}. Access date: 07.07.2020.}: \textit{proteomes}, \textit{taxonomy} and \textit{uniprotkb}.
The~last part consists of the RDF files \textit{mappingbased\_properties} from DBpedia\footnote{DBpedia project web site: \url{https://wiki.dbpedia.org/}. Access date: 07.07.2020.} and \textit{geospecies}\footnote{The Geospecies RDF: \url{https://old.datahub.io/dataset/geospecies}. Access date: 07.07.2020.}.
A brief description of the graphs in the dataset is presented in Table~\ref{tbl:graphs_for_rpq}.

\begin{table}
    \centering
\caption{Graphs for RPQ evaluation}
\label{tbl:graphs_for_rpq}
{

\rowcolors{2}{black!2}{black!10}
\begin{tabular}{|l|c|c|}
\hline
Graph & \#V & \#E  \\
\hline
\hline
LUBM1k  & 120 926 & 484 646 \\
LUBM3.5k  & 358 434 & 144 9711 \\
LUBM5.9k  & 596 760 & 2 416 513 \\
LUBM1M   & 1 188 340 & 4 820 728 \\
LUBM1.7M & 1 780 956 & 7 228 358 \\
LUBM2.3M & 2 308 385 & 9 369 511 \\
\hline
Uniprotkb & 6 442 630 & 24 465 430 \\
Proteomes & 4 834 262 & 12 366 973 \\
Taxonomy & 5 728 398 & 14 922 125 \\
\hline
Geospecies & 450 609 & 2 201 532 \\
Mappingbased\_properties & 8 332 233 & 25 346 359 \\
\hline
\end{tabular}
}
\end{table}

Queries for evaluation were generated from the templates for the most popular RPQs, specifically, the queries presented in Table 2 in~\cite{Pacaci2020RegularPQ} and in Table 5 in~\cite{Wang2019}.
These query templates are presented in Table~\ref{tbl:queries_templates}.
We generate 10 queries for each template and each graph.
The most frequent relations from the given graph were used as symbols in the query template\footnote{Used generator is available as part of CFPQ\_data project: \url{https://github.com/JetBrains-Research/CFPQ_Data/blob/master/tools/gen_RPQ/gen.py}. Access data: 07.07.2020.}.
We used the same set of queries for all LUBM graphs to investigate the scalability of the proposed algorithm.

\begin{table}
    \centering
\caption{Queries templates for RPQ evaluation}
\label{tbl:queries_templates}
{\small
\renewcommand{\arraystretch}{1.2}
\rowcolors{2}{black!2}{black!10}
\begin{tabular}{|c|c||c|c|}
\hline

Name & Query & Name & Query \\
\hline
\hline
$Q_1$   & $a^*$                               & $Q_9^5$    & $(a \mid b \mid c \mid d \mid e)^+$                     \\
$Q_2$   & $a\cdot b^*$                        & $Q_{10}^2$ & $(a \mid b) \cdot c^*$                                  \\
$Q_3$   & $a \cdot b^* \cdot c^*$             & $Q_{10}^3$ & $(a \mid b \mid c)  \cdot d^*$                          \\
$Q_4^2$ & $(a \mid b)^*$                      & $Q_{10}^4$ & $(a \mid b \mid c \mid d)  \cdot e^*$                   \\
$Q_4^3$ & $(a \mid b \mid c)^*$               & $Q_{10}^5$ & $(a \mid b \mid c \mid d \mid e)  \cdot f^*$            \\
$Q_4^4$ & $(a \mid b \mid c \mid d)^*$        & $Q_{10}^2$ & $a \cdot b$                                             \\
$Q_4^5$ & $(a \mid b \mid c \mid d \mid e)^*$ & $Q_{11}^3$ & $a \cdot b \cdot c$                                     \\
$Q_5$   & $a \cdot b^* \cdot c$               & $Q_{11}^4$ & $a \cdot b \cdot c \cdot d$                             \\
$Q_6$   & $a^* \cdot b^*$                     & $Q_{11}^5$ & $a \cdot b \cdot c \cdot d \cdot f$                     \\
$Q_7$   & $a \cdot b \cdot c^*$               & $Q_{12}$   & $(a \cdot b)^+ \mid  (c \cdot d)^+$                     \\
$Q_8$   & $a? \cdot b^*$                      & $Q_{13}$   & $(a \cdot(b \cdot c)^*)^+ \mid  (d \cdot f)^+$          \\
$Q_9^2$ & $(a \mid b)^+$                      & $Q_{14}$   & $(a \cdot b \cdot (c \cdot d)^*)^+  \cdot (e \mid f)^*$ \\
$Q_9^3$ & $(a \mid b \mid c)^+$               & $Q_{15}$   & $(a \mid b)^+ \cdot (c \mid d)^+$                       \\
$Q_9^4$ & $(a \mid b \mid c \mid d)^+$        & $Q_{16}$   & $a \cdot b \cdot (c \mid d \mid e)$                     \\
\hline
\end{tabular}
}
\end{table}

\subsubsection{Results}

We averaged the execution time of index creation over 5 runs for each query.
Index creation time for LUBM graphs set is presented in Figure~\ref{fig:lubm_all_qs}.
We can see that evaluation time depends on the query: there are queries evaluated in less than 1 second even for the largest graphs ($Q_2$, $Q_5$, $Q_{11}^2$, $Q_{11}^3$), while the worst time is 6.26 seconds ($Q_{14}$).
The execution time of our algorithm is comparable with the recent results for the same graphs and queries implemented on a distributed system over 10 nodes~\citep{Wang2019}, while we use only one node.
We conclude that our algorithm demonstrates reasonable performance to be applied to the real-world data analysis.
%\cho{Note that the accurate comparison of different approaches may be a promising direction of future research.}

\begin{figure}
    \centering
   \includegraphics[width=0.6\textwidth]{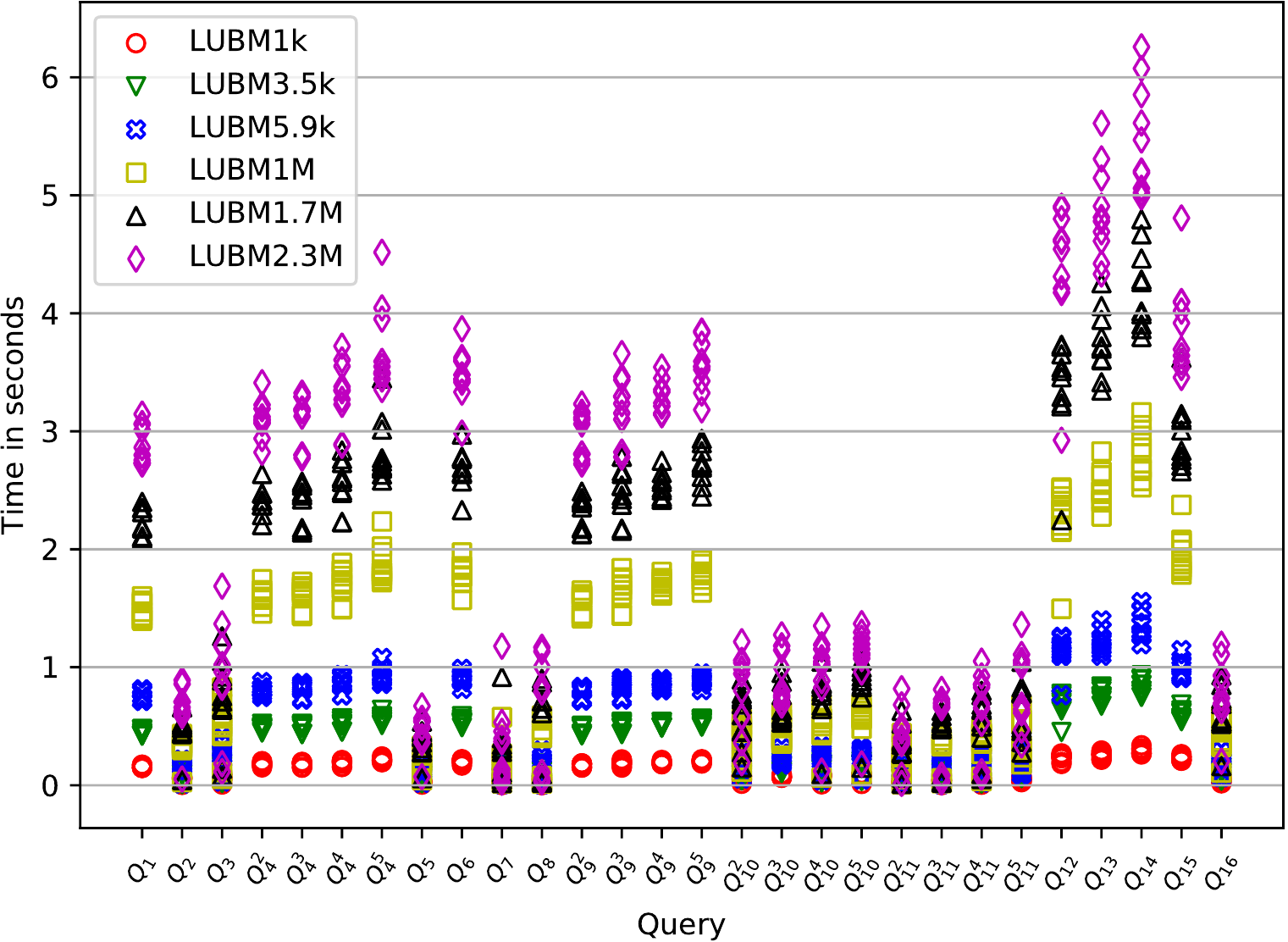}
   \caption{Index creation time for LUBM graphs}
   \label{fig:lubm_all_qs}
\end{figure}

Index creation time for each query on the real-world graphs is presented in Figure~\ref{fig:other_all_qs}.
We can see that querying small graphs requires more time than querying bigger graphs in some cases.
For example, consider $Q_{10}^4$: querying the \textit{geospecies} graph (450k vertices) in some cases requires more time than querying of \textit{mappingbased\_properties} (8.3M vertices) and \textit{taxonomy} (5.7M vertices).
We conclude that the evaluation time depends on the inner structure of a graph.
On the other hand, \textit{taxonomy} querying in many cases requires significantly more time than for other graphs, while \textit{taxonomy} is not the biggest graph.
Finally, in most cases, query execution lasts less than 10 seconds, even for bigger graphs, and no query requires more than 52.17 seconds.

\begin{figure}
    \centering
   \includegraphics[width=0.6\textwidth]{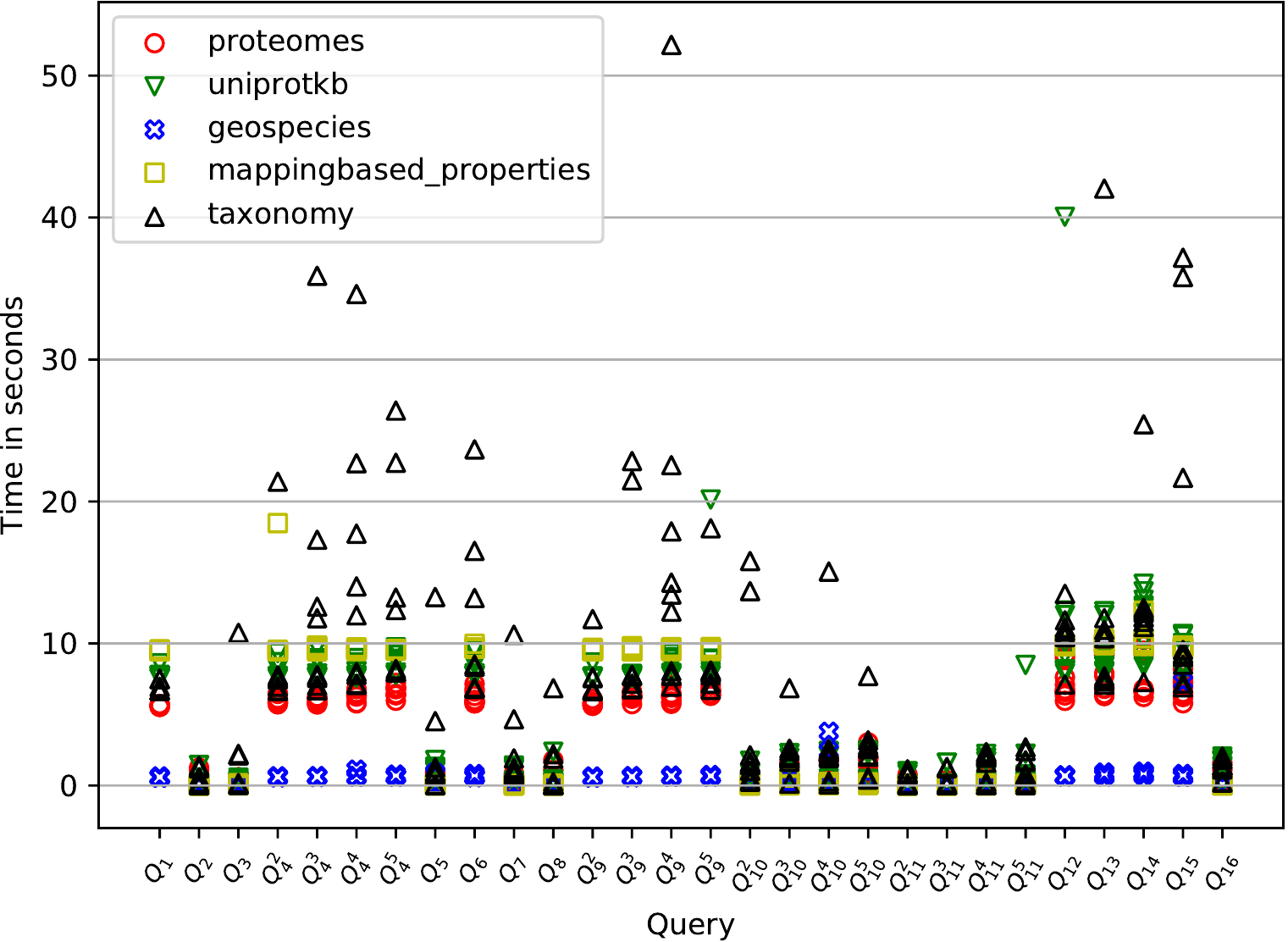}
   \caption{Index creation time for real-world RDFs}
   \label{fig:other_all_qs}
\end{figure}

\subsection{CFPQ Evaluation}

We evaluate the applicability of the proposed algorithm to CFPQ processing over real-world graphs on a number of classic cases and compare them with Azimov's algorithm.
Currently, only a single path version of Azimov's algorithm exists, and we use its implementation using PyGraphBLAS. Note that it is not trivial to compare our results with the state-of-the-art results provided by~\cite{10.1145/3398682.3399163} (Azimov's algorithm) because our algorithm computes significantly more information. While the state-of-the-art solution computes only reachability facts or a single-path semantics, our algorithm computes data necessary to restore all possible paths.

\subsubsection{Dataset}

We use CFPQ\_Data\footnote{CFPQ\_Data is a dataset for CFPQ evaluation which contains both synthetic and real-world data and queries \url{https://github.com/JetBrains-Research/CFPQ\_Data}. Access date: 07.07.2020.} dataset for evaluation.
Namely, we use relatively big RDF files and respective same-generation queries $G_1$~(Eq.~\ref{eqn:g_1}) and $G_2$~(Eq.~\ref{eqn:g_2}) which are used in other works for CFPQ evaluation.
We also use the $Geo$~(Eq.~\ref{eqn:geo}) query provided by~\cite{Kuijpers:2019:ESC:3335783.3335791} for \textit{geospecies} RDF.
Note that we use $\overline{x}$ notation in queries to denote the inverse of $x$ relation and the respective edge.
\begin{align}
\begin{split}
\label{eqn:g_1}
S \to & \overline{\textit{subClassOf}} \ \ S \ \textit{subClasOf} \mid \overline{\textit{type}} \ \ S \ \textit{type}\\   & \mid \overline{\textit{subClassOf}} \ \ \textit{subClasOf} \mid \overline{\textit{type}} \ \textit{type}
\end{split}
\end{align}
\begin{align}
\begin{split}
\label{eqn:g_2}
S \to \overline{\textit{subClassOf}} \ \ S \ \textit{subClasOf} \mid \textit{subClassOf}
\end{split}
\end{align}
\begin{align}
\begin{split}
\label{eqn:geo}
S \to & \textit{broaderTransitive} \ \  S \ \overline{\textit{broaderTransitive}} \\
      & \mid \textit{broaderTransitive} \ \  \overline{\textit{broaderTransitive}}
\end{split}
\end{align}
\begin{align}
\begin{split}
\label{eqn:ma}
S & \to \overline{d} \ V \ d \\
V & \to ((S?) \overline{a})^* (S?) (a (S?))^*
\end{split}
\end{align}

Additionally, we evaluate our algorithm on memory aliases analysis problem: a well-known problem which can be reduced to CFPQ~\citep{Zheng:2008:DAA:1328897.1328464}.
To do it, we use some graphs built for different parts of Linux OS kernel (\textit{arch}, \textit{crypto}, \textit{drivers}, \textit{fs}) and the query \textit{MA}~(Eq.~\ref{eqn:ma})~\citep{10.1145/3093336.3037744}.
The detailed data about all the graphs used is presented in Table~\ref{tbl:graphs_for_cfpq}.

{\setlength{\tabcolsep}{0.3em}
\begin{table}
    \centering
{
\caption{Graphs for CFPQ evaluation: \textit{bt} is broaderTransitive, \textit{sco} is subCalssOf}
\label{tbl:graphs_for_cfpq}
\scriptsize
\rowcolors{2}{black!2}{black!10}
\begin{tabular}{|l|c|c|c|c|c|c|c|}
\hline
Graph          & \#V       & \#E        & \#sco & \#type &\#bt & \#a  & \#d \\
\hline
\hline
eclass\_514en  & 239 111    & 523 727    & 90 512    & 72 517    &        ---        & ---  & --- \\
enzyme         & 48 815     & 109 695    & 8 163     & 14 989    &        ---        & ---  & --- \\
geospecies     & 450 609    & 2 201 532  & 0         & 89 062    &        20 867     & ---  & --- \\
go             & 272 770    & 534 311    & 90 512    & 58 483    &        ---        & ---  & --- \\
go-hierarchy   & 45 007     & 980 218    & 490 109   & 0         &        ---        & ---  & --- \\
taxonomy       & 5 728 398  & 14 922 125 & 2 112 637 & 2 508 635 &        ---        & ---  & --- \\
\hline
arch           & 3 448 422  & 5 940 484  &      ---     &  ---   &        ---        & 671 295 & 2 298 947 \\
crypto         & 3 464 970  & 5 976 774  &      ---     &  ---   &        ---        & 678 408 & 2 309 979 \\
drivers        & 4 273 803  & 7 415 538  &      ---     &  ---   &        ---        & 858 568 & 2 849 201 \\
fs             & 4 177 416  & 7 218 746  &      ---     &  ---   &        ---        & 824 430 & 2 784 943 \\
\hline
\end{tabular}
}
\end{table}
}
\subsubsection{Results}

We averaged the index creation time over 5 runs for both single-path Azimov's algorithm (\textbf{Mtx}) and the proposed algorithm (\textbf{Tns}) (see Table~\ref{tbl:CFPQ_index}).

{\setlength{\tabcolsep}{0.2em}
  \begin{table}
    \centering
    \caption{CFPQ evaluation results, time is measured in seconds}
    \label{tbl:CFPQ_index}
    \rowcolors{4}{black!2}{black!10}
    \small
    \begin{tabular}{| l | c | c | c | c | c | c | c | c |}
      \hline

      \multirow{2}{*}{Name}  & \multicolumn{2}{c|}{$G_1$} & \multicolumn{2}{c|}{$G_2$} & \multicolumn{2}{c|}{\textit{Geo}} & \multicolumn{2}{c|}{\textit{MA}}\\
      \cline{2-9}
                      & Tns    & Mtx    & Tns  & Mtx  & Tns   & Mtx   & Tns     & Mtx \\
      \hline
      \hline
      eclass\_514en   & 0.24   & 0.27   & 0.25 & 0.26 & ---   & ---   & ---     & ---\\
      enzyme          & 0.03   & 0.04   & 0.02 & 0.01 & ---   & ---   & ---     & ---\\
      geospecies      & 0.08   & 0.06   & $<0.01$ & 0.01 & 26.12 & 16.58 & ---     & ---\\
      go-hierarchy    & 0.16   & 1.43   & 0.23 & 0.86 & ---   & ---   & ---     & ---\\
      go              & 1.56   & 1.74   & 1.21 & 1.14 & ---   & ---   & ---     & ---\\
      pathways        & 0.01   & 0.01   & 0.01 & 0.01 & ---   & ---   & ---     & ---\\
      taxonomy        & 4.81   & 2.71   & 3.75 & 1.56 & ---   & ---   & ---     & ---\\
      \hline
      arch            & ---    & ---    & ---  & ---  & ---   & ---   & 262.45  & 195.51  \\
      crypto          & ---    & ---    & ---  & ---  & ---   & ---   & 267.52  & 195.54  \\
      drivers         & ---    & ---    & ---  & ---  & ---   & ---   & 1309.57 & 1050.78 \\
      fs              & ---    & ---    & ---  & ---  & ---   & ---   & 470.49  & 370.73  \\
      \hline
    \end{tabular}
  \end{table}
}

We can see that while in some cases our solution is comparable or just slightly better than Azimov's algorithm (\textit{enzyme, eclass\_514en, go}), there are cases when our solution is significantly faster (\textit{go-hierarchy}, up to 9 times faster), and when Azimov's algorithm about 1.3 times faster (all memory aliases and \textit{geospecies} with \textit{Geo} query).
Thus we can conclude that our solution is performant enough because Azimov's algorithm is the fastest known practical CFPQ algorithm~\citep{10.1145/3398682.3399163} and the current version, which we use for evaluation, computes index only for single-path semantics, but our algorithm computes much more information (index for all paths) in a comparable time.

Best to our knowledge, the proposed algorithm is the first algorithm that provides information about all paths of interest (Azimov's algorithm computes information about only one path).
The direct comparison with other solutions is impossible, and we just estimate the running time of our algorithm for a small number of cases.
Namely, we extract all paths with length not greater than 20 edges between all pairs of vertices from indices created for graphs \textit{go} and \textit{eclass\_514en} and query $G_1$.
Paths extraction for one pair of vertices requires 2.64 seconds averaged over all pairs for \textit{go} graph. The~maximal time is 4699 seconds and 217 737 paths were extracted during this time. The average number of paths between two vertices is 184.
For \textit{eclass\_514en} paths extraction for one pair of vertices requires 1.27 seconds averaged over all pairs. The~maximal time is 8.04 seconds and only one path is extracted during this time. The average number of paths between two vertices is 3.
We can see that paths can be extracted in a reasonable time, but a detailed analysis of paths extraction algorithm performance depends on graphs structure.

%Comparison of paths extraction is presented in Figures~\ref{fig:geo_matrix_cfpq} and~\ref{fig:geo_tensors_cfpq}.
%While both methods demonstrate linear time on the length of the extracted path, our generic solution is more than 1000 times slower than Azimov's single path extraction procedure.
%We conclude that current generic all-path extraction procedure is not optimal for single path extraction.

\subsection{Conclusion}

We conclude that the proposed algorithm is applicable to real-world data processing: the algorithm allows one to solve both the reachability problem and to extract paths of interest in a reasonable time.
While index creation time (reachability query evaluation) is comparable with other existing solutions, the paths extraction procedure should be improved in the future. However, the state-of-the-art solution computes only reachability facts or a single-path semantics, whereas our algorithm computes data necessary to restore all possible paths (all-paths semantics).
Finally, a detailed comparison of the proposed solution with other algorithms for CFPQ and RPQ is required.

To summarize the overall evaluation, the proposed algorithm is applicable to both RPQ and CFPQ over real-world graphs.
Thus, the proposed solution is a promising unified algorithm for both RPQ and CFPQ evaluation.

\section{Related Work}

Language constrained path querying is widely used in graph databases, static code analysis, and other areas.
Both, RPQ and CFPQ (known as CFL-reachability problem in static code analysis) are actively studied in the recent years.

There is a huge number of theoretical research on RPQ and its specific cases.
RPQ with single-path semantics was investigated from the theoretical point of view by~\cite{barrett2000formal}.
In order to research practical limits and restrictions of RPQ, a number of high-performance RPQ algorithms were provided.
For example, the derivative-based solution provided by~\cite{10.1145/2949689.2949711}, which is implemented on top of the Pregel-based system, or the solution by~\cite{10.1007/978-3-642-31235-9_12}.
But only a limited number of practical solutions provide the ability to restore paths of interest.
A recent work by~\cite{Wang2019} provides a Pregel-based provenance-aware RPQ algorithm which utilizes a Glushkov's construction~\citep{Glushkov1961}.
There is a lack of research of the applicability of linear algebra-based RPQ algorithms with paths-providing semantics.

On the other hand, many CFPQ algorithms with various properties were proposed recently.
They employ the ideas of different parsing algorithms, such as CYK in works by~\cite{hellingsRelational} and~\cite{8249039}, (G)LR and (G)LL in works by~\cite{Grigorev:2017:CPQ:3166094.3166104},~\cite{Medeiros:2018:EEC:3167132.3167265},~\cite{10.1007/978-3-319-91662-0_17},~\cite{10.1007/978-3-319-41579-6_22}.
Unfortunately, none of them has better than cubic time complexity in terms of the input graph size.
The algorithm by~\cite{Azimov:2018:CPQ:3210259.3210264} is, best to our knowledge, the first algorithm for CFPQ which is based on linear algebra.
It was shown by~\cite{10.1145/3398682.3399163} that this algorithm can be applied to real-world graph analysis problems, while~\cite{Kuijpers:2019:ESC:3335783.3335791} show that other state-of-the-art CFPQ algorithms are not performant enough to handle real-world graphs.

It is important in both RPQ and CFPQ to be able to restore paths of interest.
Some of the mentioned algorithms can solve only the reachability problem, while it may be important to provide at least one path which satisfies the query.
While~\cite{10.1145/3398682.3399163} provide the first CFPQ algorithm with single path semantics based on linear algebra,~\cite{HellSinglePath} provides the first theoretical investigation of this problem.
He also provides an overview of the related works and shows that the problem is related to the string generation problem and respective results from the formal language theory.
He concludes that both theoretical and empirical investigation of CFPQ with single-path and all-path semantics are at the early stage.
We agree with this point of view, and we only demonstrate the applicability of our solution to paths extraction and do not investigate its properties in details.

While CFPQ on $n$-node graph has a relatively straightforward $O(n^3)$ time algorithm, it is a long-standing open problem whether there exists a truly  subcubic $O(n^{3-\varepsilon})$ algorithm for this problem.
The question on the existence of a subcubic CFPQ algorithm was stated by~\cite{Yannakakis}.
A bit later~\cite{10.5555/271338.271343} proposed the CFL reachability as a framework for interprocedural static code analysis.
\cite{10.1145/258994.259006} gave a dynamic programming formulation of the problem which runs in $O(n^3)$ time.
The problem of the cubic bottleneck of context-free language reachability is also discussed by~\cite{10.5555/788019.788876} and~\cite{10.1145/258994.259006}.
The slightly subcubic algorithm with $O(n^3/\log{n})$ time complexity was provided by~\cite{10.1145/1328438.1328460}.
This result is inspired by recursive state machine reachability.
The first truly subcubic algorithm with $O(n^\omega polylog(n))$ time complexity ($\omega$ is the best exponent for matrix multiplication) for an arbitrary graph and 1-Dyck language was provided by~\cite{8249039}, and~\cite{pavlogiannis2020finegrained}.
Other partial cases were investigated by~\cite{10.1145/3158118},~\cite{zhang2020conditional}.

Employing linear algebra is a promising way to high-performance graph analysis.
There are many works which formulate specific graph algorithms in terms of linear algebra, for example, such algorithms as for computing transitive closure and all-pairs shortest paths.
Recently this direction was summarized in GrpahBLAS API~\citep{7761646} which provides building blocks to develop a graph analysis algorithm in terms of linear algebra.
There is a number of implementations of this API, such as SuiteSparse:GraphBLAS~\citep{10.1145/3322125} or CombBLAS~\citep{10.1177/1094342011403516}.
Approaches to evaluate different classes of queries in different systems based on linear algebra are being actively researched.
This approach demonstrates significant performance improvement when applied for SPARQL queries evaluation~\citep{10.1145/3302424.3303962,DBLP:journals/corr/MetzlerM15a} and for Datalog queries evaluation~\citep{sato_2017}.
Finally, RedisGraph~\citep{8778293}, a linear-algebra powered graph database, was created and it was shown that in some scenarios it outperforms many other graph databases.
\section{Conclusion and Future Work}

In this work, we present an improved version of the tensor-based algorithm for CFPQ: we reduce the algorithm to operations over Boolean matrices, and we provide the ability to extract all paths which satisfy the query.
Moreover, the provided algorithm can handle grammars in EBNF, thus it does not require CNF transformation of the grammar and avoids grammar size blow-up.
As a result, the algorithm demonstrates practical performance not only on CFPQ queries but also on RPQ ones, which is shown by our evaluation.
Thus, we provide a universal linear algebra based algorithm for RPQ and CFPQ evaluation with all-paths semantics.
Moreover, our algorithm opens a way to tackle the long-standing problem of determining whether a subcubic CFPQ exists.
This is done by reducing the algorithm to incremental transitive closure: incremental transitive closure with $O(n^{3-\varepsilon})$ total update time for $n^2$ updates, such that each update returns all of the new reachable pairs, implies $O(n^{3-\varepsilon})$ CFPQ algorithm.
We prove $O({|P|}^3n^3/\log (|P|n))$ time complexity by providing $O(n^3/\log{n})$ incremental transitive closure algorithm.

Recent hardness results for dynamic graph problems demonstrate that any further improvement for incremental transitive closure (and, hence, CFPQ) will imply a major breakthrough for other long-standing dynamic graph problems. An algorithm for incremental dynamic transitive closure with total update time $O(mn^{1-\varepsilon})$ ($n$ denotes the number of graph vertices, $m$ is the number of graph edges) even with polynomial $poly(n)$ time preprocessing of the input graph and $m^{\delta - \varepsilon}$ query time per query for any $\delta \in (0, 1/2]$ will refute the Online Boolean Matrix-Vector Multiplication (OMv) Conjecture, which is used to prove conditional lower bounds for many dynamic problems ~\citep{8948597, 10.1145/2746539.2746609}.

\cite{10.1145/3158118} obtained a conditional cubic lower bound for the CFL-reachability problem via the combinatorial BMM Conjecture. The combinatorial BMM Conjecture states that there is no truly subcubic $O(n^{3-\varepsilon})$ combinatorial algorithm for multiplying two $n \times n$ Boolean matrices. Such a lower bound holds only for \textit{combinatorial} algorithms, leaving the possibility of an improvement via
an algebraic algorithm. Can the above mentioned OMv Conjecture be used to establish conditional \textit{algebraic} lower bounds for the CFL-reachability (CFPQ evaluation) problem? Another powerful assumption relative to which many conditional lower bounds are proved is the strong exponential time hypothesis (SETH) introduced by \cite{IMPAGLIAZZO2001367}. Recently~\cite{chistikov2021subcubic} showed that there cannot be a fine-grained reduction from SAT to CFL-reachability for a conditional lower bound stronger than $n^{\omega}$, unless the nondeterministic strong exponential time hypothesis (NSETH) fails. Strong assumptions like 3SUM Conjecture by \cite{Gajentaan1995OnAC}, multiphase problem by \cite{Patrascu} are variants of OMv conjecture or can be strengthened through it \citep{10.1145/2746539.2746609}. So the OMv Conjecture seems to be a good candidate for a proving conditional lower bound for the CFL-reachability (CFPQ evaluation) problem.

Also, an interesting task for the future is to improve the logarithmic factor in the obtained bound.

We also plan to improve bounds in partial cases for which dynamic transitive closure can be computed faster than in the general case.
Examples of such cases are querying over planar graphs~\citep{10.1007/3-540-57273-2_72}, undirected graph, and others.
%Is it possible to use \cho{these facts} to provide a better CFPQ algorithm for respective partial cases?
In the case of planarity, it is interesting to investigate properties of the input graph and grammar which allow us to preserve planarity during query evaluation.

Note that both our algorithm and the state-of-the-art solutions have subcubic time complexity in terms of the grammar size \citep{Azimov:2018:CPQ:3210259.3210264, hellingsRelational, hellingsPathQuerying, 10.1145/258994.259006, 10.1145/3398682.3399163}.
However, our approach does not require the input grammar to be translated to the Chomsky Normal Form and thus avoids the quadratic blow-up in its size which leads to a better performance in practice.

An important task for future research is a detailed investigation of the paths extraction algorithm.~\cite{HellSinglePath} provides a theoretical investigation of the single-path extraction and shows that the problem is related to the formal language theory.
Extraction of all paths is more complicated and should be investigated carefully in order to provide an optimal algorithm.

%On the other hand, the reduction presented in the paper opens a way to investigate streaming graph querying.
%This way we can formulate the following questions. Is it possible to provide more detailed analysis of dynamic CFPQ queries than in~\cite{10.1007/978-3-662-54458-7_16}? Is it possible to create a practical solution for CFPQ querying of streaming graphs? Is it possible to improve the existing solutions for RPQ of streaming graphs?

From a practical perspective, it is necessary to analyze the usability of advanced algorithms for dynamic transitive closure.
In the current work, we evaluate the naive implementation in which transitive closure is recalculated from scratch on each iteration.
It is shown by~\cite{cs6345} that some advanced algorithms for dynamic transitive closure can be efficiently implemented.
Can one of these algorithms be efficiently parallelized and utilized in the proposed algorithm?

Also, it is necessary to evaluate GPGPU-based implementation.
Evaluation of Azimov's algorithm shows that it is possible to improve performance by using GPGPU because operations of linear algebra can be efficiently implemented on GPGPU~\citep{Mishin:2019:ECP:3327964.3328503,10.1145/3398682.3399163}.
Moreover, for practical reasons, it is interesting to provide a multi-GPU version of the algorithm and to utilize unified memory, which is suitable for linear algebra based processing of out-of-GPGPU-memory data and traversing on large graphs~\citep{8946118,10.14778/3384345.3384358}.

In order to simplify the distributed processing of huge graphs, it may be necessary to investigate different formats for sparse matrices, such as HiCOO format~\citep{10.5555/3291656.3291682}.
Another interesting question in this direction is about the utilization of virtualization techniques: should we implement a distributed version of the algorithm manually or it can be better to use CPU and RAM virtualization to get a virtual machine with a huge amount of RAM and a big number of computational cores.
The experience of the Trinity project team shows that it can make sense~\citep{10.1145/2463676.2467799}.

Finally, it is necessary to provide a multiple-source version of the algorithm and integrate it with a graph database.
RedisGraph\footnote{RedisGraph is a graph database that is based on the Property Graph
Model. Project web page: \url{https://oss.redislabs.com/redisgraph/}. Access date:
07.07.2020.}~\citep{8778293} is a suitable candidate for this purpose.
This database uses SuiteSparse---an implementation of GraphBLAS standard---as a base for graph processing.
This fact allowed to~\cite{10.1145/3398682.3399163} to integrate Azimov's algorithm to RedisGraph with minimal effort.

% Authors must disclose all relationships or interests that
% could have direct or potential influence or impart bias on
% the work:
%
% \section*{Conflict of interest}
%
% The authors declare that they have no conflict of interest.

% BibTeX users please use one of
%\bibliographystyle{spbasic}      % basic style, author-year citations
%\bibliographystyle{spmpsci}      % mathematics and physical sciences
%\bibliographystyle{spphys}       % APS-like style for physics
%\bibliography{}   % name your BibTeX data base

\bibliographystyle{spbasic}
\bibliography{tensor_product_CFPQ}
\end{document}